\newtheorem{definition}{Definition}
\newtheorem{theorem}{Theorem}
\newtheorem{proposition}{Proposition}
\theoremstyle{remark}
\newcommand{\la}{\ensuremath{ \mathcal{L}}}
\newcommand{\PLB}{\ensuremath{ \mathcal{P}_{\alpha}}}
\newcommand{\PLC}{\ensuremath{ \mathcal{P}'_{\alpha}}}
\newcommand{\Sparse}{\ensuremath{ \mathcal{S}_{c,n}}}
\date{February 12, 2015}
\title{Near-optimal adjacency labeling scheme for power-law graphs}
\author{Casper Petersen, Noy Rotbart,\\ Jakob Grue Simonsen and Christian Wulff-Nilsen \\ \\
\small{Department of Computer Science, University of Copenhagen} \\
\small{Universitetsparken 5, 2100 Copenhagen}\\
 \small{\texttt{\{cazz,noyro,simonsen,koolooz\}@diku.dk}} \\
 }
\begin{document}
\begin{titlepage}
\clearpage\maketitle
\thispagestyle{empty}
\begin{abstract}
An adjacency labeling scheme is a method that assigns labels to the vertices of a graph such that adjacency between vertices 
can be inferred directly from the assigned label, without using a  centralized data structure.
We devise adjacency labeling schemes for  the family  of power-law graphs. This  family that  has been used to model many types of networks, e.g. the Internet AS-level graph. Furthermore, we  prove an almost matching  lower bound for this family.
We also provide an asymptotically near-optimal  labeling scheme for sparse graphs.
Finally, we validate the efficiency of our labeling scheme by  an experimental evaluation  using both synthetic data and real-world networks of up to hundreds of thousands of vertices. 

\end{abstract}
\end{titlepage}
\newpage
\section{Introduction}
A fundamental problem in networks is how to disseminate the structural information of the underlying graph of a network to its vertices. The purpose of such dissemination is that the local topology of the network can be inferred using only local information stored in each vertex without using costly access to large, global data structures.
One way of doing so is via  \emph{ labeling schemes}: an algorithm that assigns a bit string--a \emph{label}--to each vertex so that a query between any two vertices can be deduced solely from their respective labels. 
The main objective of  labeling schemes is to minimize the \emph{maximum label size}: the maximum number of bits used in a label of any vertex. 
Labeling schemes for adjacency and other properties have found practical use in  XML search engines~\cite{cohen2010labeling}, mapping services~\cite{abraham2011hub} and routing~\cite{krioukov2004compact}.

In this paper we are interested in particular with labeling schemes for   \emph{adjacency} queries. 
For general graphs Moon~\cite{moon1965minimal} showed lower and upper bounds of respectively $n/2$ and $n/2+\log n$ bits on the  label size.
The asymptotic gap between these bounds was only recently closed  by Alstrup et al.~\cite{alstrup2014adjacency} who proved an upper bound of $n/2 + 6$ bits. 
Upper bounds for adjacency labeling schemes exist for many specific classes of graphs, including trees~\cite{Alstrup02}, planar graphs~\cite{gavoille2007shorter},  bounded-degree graphs~\cite{adjiashvili2014labeling}, and bipartite graphs~\cite{lozin2007minimal}.

However, for classes of graphs whose statistical properties--in particular their \emph{degree distribution}--more closely resemble that of real-world networks, there has, to our knowledge, been no research on adjacency labeling schemes.
One class of graphs extensively used for modelling real-world networks is \emph{power-law graphs}: roughly, $n$-vertex graphs where the number of vertices of degree $k$ is proportional to $n/k^{\alpha}$ for some positive $\alpha$. Power-law graphs (also called scale-free graphs in the literature) have been used, e.g., to model the Internet AS-level graph \cite{DBLP:journals/ton/SiganosFFF03,DBLP:conf/podc/AkellaCKS03}, and many other types of network (see, e.g., \cite{mitzenmacher2004brief,clauset2009power} for overviews). 
The adequacy of fit of power-law graph models to actual data, as well as the empirical correctness of the conjectured mechanisms giving rise to power-law behaviour, have been subject to criticism (see, e.g., \cite{DBLP:journals/jacm/AchlioptasCKM09,clauset2009power}). 
In spite of such criticism, and because their degree distribution affords a reasonable approximation of the degree distribution of many networks, the class of power-law graphs remains a popular tool in network modelling whose statistical behaviour is well-understood: e.g., for power-law graphs with $2<\alpha<3$, the range most often seen in the modeling of real-world networks \cite{clauset2009power}, it is known that with high probability the average distance between any two vertices is  $O(\log \log n)$, the diameter is $O(\log n)$ and there exists a dense subgraph of $n^{c/\log \log n}$ vertices~\cite{chung2004average}. 

Routing labeling schemes for power-law graphs  have been investigated by Brady and Cowen~\cite{brady2006compact}, and by Chen et al.~\cite{chen2012compact}. Labeling schemes for other properties than adjacency have been investigated for various classes of graphs, e.g., distance~\cite{gavoillea2004distance}, and flow~\cite{katz2004labeling}. 
Dynamic labeling schemes were studied by Korman and Peleg~\cite{korman2007compact,Korman07,korman2007general} and recently by Dahlgaard et. al~\cite{dahlgaard2014dynamic}.
Experimental evaluation for some labeling schemes for various properties on general graphs have been performed by Caminiti et.~al~\cite{caminiti2008engineering}, Fischer~\cite{fischer2009short} and Rotbart et.~al~\cite{rotbart2014evaluation}.

Adjacency labeling schemes are tightly coupled with  the graph-theory related concept of induced universal graphs.
Given a  graph family  $\mathcal{F}$, the aim  is to find smallest $N$ such that a graph of  $N$ vertices contains all graphs in $\mathcal{F}$ as induced subgraphs. 
Kannan, Naor and Rudich~\cite{Kannan92} showed that an $f(n) \log n$ adjacency labeling scheme for $\mathcal{F}$  constructs an induced universal graph for this family of  $2^{f(n)}$ vertices. Some  of the adjacency labeling schemes reported earlier  contributed a better bound than was known of  induced universal graphs (see e.g~\cite{BCLR,Alstrup02}).
In the context of  sparse graphs,  a body of work on universal graphs\footnote{A graph that  contains each graph from the graph family as subgraph, not necessarily induced.} for this family was  investigated both by  Babai et al.~\cite{babai1982graphs} and  by Alon and Asodi~\cite{Alon2002universal}.

\subsection{Our contribution}

Our contributions are:

\paragraph{An  $O(\sqrt[\alpha] n (\log n)^{1 - 1/\alpha})$ adjacency labeling scheme for power-law graphs $G$.}
The scheme is based on two ideas:
(I) a labeling \emph{strategy} that  partitions the vertices of $G$ into high (``fat'') and low degree (``thin'') vertices based on a threshold degree, and (II) a threshold \emph{prediction} that depends only on the coefficient $\alpha$ of a power-law curve fitted to the degree distribution of $G$. 
Real-world power-law graphs rarely exceed  $~10^{10}$ vertices, implying a label size of at most  ${10^{5}}$ bits, well within the processing capabilities of current hardware. 
We claim that our  scheme is thus appealing in practice   due both to  its simplicity and hte small size of its labels.
Using the same ideas, we get an  asymptotically near-tight  $O(\sqrt{n \log n})$ adjacency labeling scheme for sparse graphs.

\paragraph{A lower bound of $\Omega(\sqrt[\alpha]{n})$ bits on the maximum label size for \emph{any} adjacency labeling scheme for power-law graphs.}
To this end we define a  restrictive subclass of power-law graphs and show that it is contained in the bigger class we study for the upper bound; we show that this class requires label size $\Omega(\sqrt[\alpha]{n})$ for $n$-vertex graphs.
This lower bound shows that our upper bound above is asymptotically  optimal, bar a $(\log n)^{1 - 1/\alpha}$ factor.
By the connections between adjacency labeling schemes and universal graphs, we also obtain upper and lower bounds for induced universal graphs for power-law graphs.

\paragraph{An experimental investigation  of our labeling scheme}
Using both real-world (23K-325K vertices) and synthetic (300K-1M vertices) data sets, we observe that:
(i) Our threshold \emph{prediction} performs close to optimal when using the labeling \emph{strategy} above. 
(ii) our labeling scheme achieves maximum label size several orders of magnitude smaller than the state-of-the-art labeling schemes for more general graph families.
\vspace{\baselineskip}

In addition, our study  may contribute to  the understanding of the quality of  \emph{generative models}---procedures that ``grow'' random graphs whose degree distributions are with high probability ``close'' to power-law graphs,  such as the Barabasi-Albert model~\cite{barabasi1999emergence} and the   Aiello-Chung-Lu model~\cite{aiello2001random}. As a first step, we provide an evidence  that the randomized Barabasi-Albert model~\cite{barabasi1999emergence} produces only a small fraction of the power-law graphs possible.
 \section{Graph Families Related to Power-Law Graphs}\label{Sec:GraphFamilies}
In this section we define two families of graphs $\PLB$ and $\PLC$ with $\PLC\subseteq\PLB$. Family $\PLB$ is rich enough to contain the graphs whose degree distribution is approximately, or perfectly, power-law distributed, and our upper bound on the label size for our labeling scheme holds for any graph in $\PLB$. Family $\PLC$ is used to show our lower bound. In the following, let $i_1 = \Theta(\sqrt[\alpha]n)$ be the smallest integer such that $\lfloor Cn/i_1^\alpha\rfloor \leq 1$, and let $C'\geq(\frac C{\alpha-1} + \frac{i_1}{\sqrt[\alpha] n} + 5)^{\alpha} + \frac{C}{\alpha - 1}$ be a constant; we shall use $C'$ in the remainder of the paper.
\begin{definition} \label{def:general-family}
Let $\alpha > 1$ be a real number. $\PLB$ is the family of graphs $G$ such that if $n = \vert V(G)\vert$ then for all integers $k$ between $\sqrt[\alpha]{n/\log n}$ and $n-1$, $\sum_{i = k}^{n-1} {\vert V_i\vert} \leq C'(\frac{n}{k^{\alpha-1}})$.
\end{definition}

The class of $\alpha$\emph{-proper} power law graphs contains graphs where  the number of vertices of degree $k$
must be $C \frac{n}{k^{\alpha}}$ rounded either up or down and the number of vertices of degree $k$ is non-increasing
with $k$. Note that the function $k \mapsto  C \frac{1}{k^{\alpha}}$ is strictly decreasing.
 
\begin{definition}\label{def:proper}
Let $\alpha > 1$ be a real number. We say that an $n$-vertex graph  $G=(V,E)$ is an  $\alpha$-\emph{proper power-law graph} 
if
\begin{enumerate}
\item $\lfloor Cn\rfloor - i_1 - 1\leq\vert V_1\vert\leq\lceil Cn\rceil$,
\item $\lfloor C\frac n{2^\alpha}\rfloor\leq\vert V_2\vert\leq\lceil C\frac n{2^\alpha}\rceil + 1$,
\item for every $i$ with $3 \leq i \leq n$:  $\vert V_i \vert\in \{\lfloor C\frac{n}{i^{\alpha}} \rfloor, \lceil C \frac{n}{i^{\alpha}} \rceil\}$, and
\item for every $i$ with $2 \leq i \leq n-1$: $\vert V_i \vert \geq \vert V_{i+1} \vert$.
\end{enumerate}
The family of $\alpha$-proper power-law graphs is denoted $\PLC$.
\end{definition}
Note that we allow slightly more noise in the sizes of $V_1$ and $V_2$ than in the remaining sets; without it, it seems tricky to prove a better lower bound than $\Omega(\sqrt[\alpha+1]{n})$ on label sizes.

We show the following properties of $\PLC$. 
\begin{proposition}\label{prop:maxvertexproper}\label{prop:maxdegree}
The maximum degree in an $n$-vertex graph in $\PLC$ is at most $\left(\frac{C}{\alpha - 1} + 2\right) \sqrt[\alpha]{n} + i_1 + 3 = \Theta(\sqrt[\alpha]n)$.
\end{proposition}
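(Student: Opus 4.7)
The plan is to obtain the bound via a counting argument on the total vertex count $\sum_{i\geq 1}|V_i| \leq n$, exploiting the lower bounds on $|V_i|$ from the four conditions of Definition~\ref{def:proper}. Let $d$ denote the maximum degree. If $d \leq i_1+3$ the bound is immediate, so I assume $d > i_1+3$ and in particular $d\geq 2$. The crucial first observation is that property~4 (non-increasing $|V_i|$ for $i\geq 2$) combined with $|V_d|\geq 1$ yields $|V_i|\geq 1$ for every $2\leq i\leq d$; this fills in the ``high-degree tail'' where the floor $\lfloor Cn/i^\alpha\rfloor$ vanishes and would otherwise give no information.

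Next I split the sum $\sum_{i=1}^d |V_i|$ at the threshold $k=\lceil\sqrt[\alpha]n\rceil$. For $i=1$, property~1 gives $|V_1|\geq Cn - i_1 - 2$; for $2\leq i\leq k$, properties~2 and 3 together with $\lfloor x\rfloor\geq x-1$ give $|V_i|\geq Cn/i^\alpha - 1$; for $k<i\leq d$, the monotonicity bound gives $|V_i|\geq 1$. Summing these contributions yields
\[
\sum_{i=1}^d |V_i| \;\geq\; Cn\sum_{i=1}^k \frac{1}{i^\alpha} \;+\; d \;-\; 2k \;-\; i_1 \;-\; 1,
\]
and combined with $\sum_{i=1}^d |V_i|\leq n$ this rearranges to $d \leq n - Cn\sum_{i=1}^k 1/i^\alpha + 2k + i_1 + 1$.

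The final step is the integral tail estimate $\sum_{i=1}^k 1/i^\alpha \geq \zeta(\alpha) - \int_k^\infty x^{-\alpha}\,dx = \zeta(\alpha) - \frac{1}{(\alpha-1)k^{\alpha-1}}$. Since $\PLC$ must be non-empty for the statement to be meaningful, the sum of the lower bounds on the $|V_i|$'s cannot exceed $n$, which forces $C\zeta(\alpha)\leq 1 + o(1)$; the reverse direction from counting upper bounds on the $|V_i|$'s gives essentially the matching inequality, so the linear term $n - Cn\zeta(\alpha)$ is non-positive up to lower-order corrections. The residual tail $\frac{Cn}{(\alpha-1)k^{\alpha-1}}$ evaluated at $k=\lceil\sqrt[\alpha]n\rceil$ contributes $\frac{C\sqrt[\alpha]n}{\alpha-1}$, and together with $2k \leq 2\sqrt[\alpha]n + 2$ these combine into the claimed $(\frac{C}{\alpha-1}+2)\sqrt[\alpha]n + i_1 + 3$. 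The chief obstacle is the exact bookkeeping of the constants: tracking how the floor/ceiling slack in properties~1--3 assembles into the additive ``$+3$'', confirming that the bound $Cn\sum_{i=1}^k 1/i^\alpha \geq Cn\zeta(\alpha) - \tfrac{Cn}{(\alpha-1)k^{\alpha-1}}$ really suffices to cancel the linear-in-$n$ contribution under the implicit balance $C\zeta(\alpha)\approx 1$ forced by the non-emptiness of $\PLC$, and verifying that the threshold $k=\lceil\sqrt[\alpha]n\rceil$ lies in the regime where $\lfloor Cn/i^\alpha\rfloor\geq 1$ for every $2\leq i\leq k$ (so that property~3 can be invoked directly, rather than falling back onto the weaker monotonicity bound).
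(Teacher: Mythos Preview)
Your approach is essentially the same as the paper's: both proofs lower-bound $\sum_{i\le k}|V_i|$ using properties 1--3, invoke property~4 to get $|V_i|\ge 1$ on the tail $k<i\le d$, compare the resulting sum to $n$, and close with the integral tail estimate for $\sum_{i>k} i^{-\alpha}$. The only cosmetic difference is that the paper phrases it as ``at most $n-S_{k'}$ vertices have degree $>k'$, hence by monotonicity the maximum degree is at most $k'+(n-S_{k'})$'', whereas you fold the tail directly into a single inequality $\sum_{i=1}^d|V_i|\le n$; algebraically these are identical.

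The one place where your write-up wobbles is the treatment of the linear term $n - Cn\zeta(\alpha)$. You try to argue that $C\zeta(\alpha)\approx 1$ is ``forced by the non-emptiness of $\PLC$'', but this is both unnecessary and not quite rigorous. In the paper $C$ is (implicitly) the normalization constant $C=1/\zeta(\alpha)$, so $C\zeta(\alpha)=1$ holds \emph{exactly} and the linear term vanishes on the nose; the paper uses this silently in the step $Cn\sum_{i=1}^{k'}i^{-\alpha}\ge n\bigl(1-C\sum_{i>k'}i^{-\alpha}\bigr)$. Once you take that as given, your bookkeeping of the additive constants goes through cleanly and matches the stated bound.
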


\begin{proof}
Let $n > 0$ be an integer and let $k' = \lfloor \sqrt[\alpha]{n} \rfloor$. 
Furthermore, let $S_{k'} = \sum_{i=1}^{k'} \vert V_i\vert$, that is $S_{k'}$ is the number of vertices of degree at most $k'$. Let $S^{-}_{k'} = (\sum_{i=1}^{k'} \lfloor Cni^{-\alpha}\rfloor) - i_1 - 1$. Then
$S_{k'} \geq S^{-}_{k'}$. We now bound $S^{-}_{k'}$ from below.
For every $i$ with $1 \leq i \leq k'$,
\begin{align*}
S^{-}_{k'} + k' & = -i_1 - 1 + \sum_{i=1}^{k'} \left(\left\lfloor Cn i^{-\alpha}\right\rfloor + 1\right) \geq  -i_1 - 1 + \sum_{i=1}^{k'} Cn i^{- \alpha}  = -i_1 - 1 + Cn \sum_{i=1}^{k'} i^{-\alpha} \\
& \geq n \left(1 - C\sum_{i=k'+1}^{\infty} i^{-\alpha} \right) - i_1 - 1 
 \geq n \left( 1 - C\int_{k'}^\infty x^{-\alpha} dx \right) - i_1 - 1\\
 & = n \left( 1 - C\left[ \frac{1}{\alpha - 1} x^{-\alpha + 1}\right]_{\infty}^{k'}\right) - i_1 - 1 = n \left( 1 - \frac{C}{\alpha - 1} \left( \lceil\sqrt[\alpha]{n} \rceil \right)^{-\alpha + 1}\right) - i_1 - 1\\
 & \geq n \left( 1 -  \frac{C}{\alpha - 1} \left(\sqrt[\alpha]{n}\right)^{-\alpha + 1} \right) - i_1 - 1 = n - \frac{Cn}{\alpha - 1}n^{-1+\frac{1}{\alpha}} - i_1 - 1\\
& = n - \frac{C}{\alpha - 1}\sqrt[\alpha]{n} - i_1 - 1,
\end{align*}
giving $S_{k'} \geq S^{-}_{k'} \geq n - \frac{C}{\alpha - 1}\sqrt[\alpha]{n} - \lceil \sqrt[\alpha]{n} \rceil - i_1 - 1$. There are thus at most $\frac{C}{\alpha - 1} \sqrt[\alpha]{n} + \lfloor \sqrt[\alpha]{n} \rfloor + i_1 + 1$ vertices of degree strictly more than $k' = \lceil \sqrt[\alpha]{n} \rceil$. Since for every $1 \leq i \leq n-1$: $\vert V_i \vert \geq \vert V_{i+1} \vert$, it follows that the maximum degree of any $\alpha$-proper power-law graph is at most $\left(\frac{C}{\alpha - 1} + 2\right) \sqrt[\alpha]{n} + i_1 + 3$.
\end{proof}

\begin{proposition}\label{prop:powerlawsparse}
For $\alpha > 2$, all graphs in $\PLC$ are sparse.
\end{proposition}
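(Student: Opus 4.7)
The goal is to show that any $G \in \PLC$ has $|E(G)| = O(n)$. The plan is to bound the sum of degrees, $2|E(G)| = \sum_{i=1}^{n-1} i \cdot |V_i|$, by exploiting two facts we have already established: the bound on $|V_i|$ coming directly from Definition~\ref{def:proper}, and the fact that the maximum degree is $\Theta(\sqrt[\alpha]{n})$ from Proposition~\ref{prop:maxdegree}.

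First I would use Proposition~\ref{prop:maxdegree} to truncate the summation: letting $K = \left(\frac{C}{\alpha-1}+2\right)\sqrt[\alpha]{n} + i_1 + 3 = \Theta(\sqrt[\alpha]{n})$, we have $|V_i| = 0$ for $i > K$, so $2|E(G)| = \sum_{i=1}^{K} i \cdot |V_i|$. Next, I would apply the bounds from Definition~\ref{def:proper}: for $i \geq 3$, $|V_i| \leq \lceil Cn/i^\alpha\rceil \leq Cn/i^\alpha + 1$, and for $i \in \{1,2\}$ one gets $|V_i| \leq Cn/i^\alpha + O(1)$ as well (using $|V_1| \leq \lceil Cn \rceil$ and $|V_2| \leq \lceil Cn/2^\alpha \rceil + 1$). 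Thus
\begin{equation*}
2|E(G)| \;\leq\; Cn \sum_{i=1}^{K} i^{1-\alpha} \;+\; O\!\left(\sum_{i=1}^{K} i\right).
\end{equation*}

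The main step is then to observe that for $\alpha > 2$, we have $1-\alpha < -1$, so the tail sum $\sum_{i=1}^{\infty} i^{1-\alpha}$ converges to a constant $\zeta(\alpha-1)$ (or, more elementarily, is bounded by $1 + \int_1^\infty x^{1-\alpha}\,dx = 1 + \frac{1}{\alpha-2}$). Hence the first term is $O(n)$. The second term is $O(K^2) = O(n^{2/\alpha})$, and since $\alpha > 2$ implies $2/\alpha < 1$, it is $o(n)$. Combining, $|E(G)| = O(n)$, so $G$ is sparse.

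There is essentially no obstacle here; the only small care is that the noise terms for $|V_1|$ and $|V_2|$ in Definition~\ref{def:proper} differ from the generic case, but they contribute only $O(n)$ and $O(n)$ respectively to $2|E(G)|$ (being absorbed into the leading $Cn \cdot i^{1-\alpha}$ terms for $i = 1, 2$), so they do not affect the conclusion. The key conceptual ingredient is simply the convergence of $\sum i^{1-\alpha}$ for $\alpha > 2$, which is exactly why the $\alpha > 2$ hypothesis is needed.
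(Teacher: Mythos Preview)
Your proof is correct and is essentially identical to the paper's: both truncate the degree sum at the $\Theta(\sqrt[\alpha]{n})$ maximum degree from Proposition~\ref{prop:maxdegree}, bound $|V_i|\leq Cn/i^\alpha + O(1)$ via Definition~\ref{def:proper}, and then split into $Cn\sum i^{1-\alpha}=Cn\,\zeta(\alpha-1)=O(n)$ plus an $O(K^2)=O(n^{2/\alpha})=o(n)$ error term. The paper's version is slightly terser but uses exactly the same decomposition and the same convergence argument.
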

\begin{proof}
By Proposition \ref{prop:maxvertexproper}, the maximum degree of an $n$-vertex $\alpha$-proper power-law
graph is at most $k' \triangleq \left(\frac{C}{\alpha - 1} + 2\right) \sqrt[\alpha]{n} + i_1 + 3$, whence
the total number of edges is at most $\frac{1}{2}\sum_{k=1}^{k'} k \vert V_k\vert$. By definition,
$\vert V_k \vert\leq \lceil \frac{Cn}{k^\alpha}\rceil \leq \frac{Cn}{k^{\alpha}} + 1$ for $k\neq 2$ and $\vert V_2\vert\leq\lceil\frac{Cn}{2^{\alpha}}\rceil + 1$, and thus
\begin{align*}
\frac{1}{2}\sum_{k=1}^{k'} k\vert V_k\vert &\leq 1 + \frac{1}{2}\sum_{k=1}^{k'} k \left(\frac{Cn}{k^{\alpha}} + 1 \right)
 \leq 1 + \frac{k'(k'+1)}{4} + Cn\sum_{k=1}^{\infty} k^{-\alpha+1} \\
&   = O(n^{2/\alpha}) + Cn \zeta(\alpha - 1) = O(n).
\end{align*}
\end{proof}

\begin{proposition}\label{prop:Contained}
$\PLC \subseteq \PLB$.
\end{proposition}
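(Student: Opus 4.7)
The plan is to fix an arbitrary $G \in \PLC$ with $n = |V(G)|$ and any integer $k$ in the admissible range $\sqrt[\alpha]{n/\log n} \leq k \leq n-1$, and to verify the inequality $\sum_{i=k}^{n-1}|V_i| \leq C'n/k^{\alpha-1}$ required by Definition~\ref{def:general-family}. The first reduction invokes Proposition~\ref{prop:maxdegree}: the maximum degree $D$ of $G$ satisfies $D \leq \bigl(\tfrac{C}{\alpha-1}+2\bigr)\sqrt[\alpha]n + i_1 + 3 = \Theta(\sqrt[\alpha]n)$, so $|V_i|=0$ whenever $i > D$. Hence if $k > D$ the sum is zero and the inequality is trivial, and I may restrict to $k \leq D$, which (together with the admissible range) forces $k = O(\sqrt[\alpha]n)$ and $k \geq 3$ for all but finitely many $n$.

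In this regime the pointwise estimate $|V_i|\leq\lceil Cn/i^\alpha\rceil\leq Cn/i^\alpha+1$ from item~3 of Definition~\ref{def:proper} applies to every $i$ in the sum, yielding
\[
\sum_{i=k}^{n-1}|V_i| \;\leq\; Cn\sum_{i=k}^{D} i^{-\alpha} \;+\; (D-k+1).
\]
The tail series is handled by the standard integral comparison $\sum_{i=k}^{\infty} i^{-\alpha} \leq k^{-\alpha}+\int_k^{\infty}x^{-\alpha}\,dx = k^{-\alpha}+\tfrac{1}{(\alpha-1)k^{\alpha-1}}$, which, after multiplication by $k^{\alpha-1}/n$, contributes $\tfrac{C}{\alpha-1}$ plus a negligible $C/k$. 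For the count term, the crucial observation is that since $k \leq D$ one has $(D-k+1)k^{\alpha-1} \leq D\cdot D^{\alpha-1} = D^\alpha$; absorbing the additive $3/\sqrt[\alpha]n \leq 3$ into the linear coefficient (so the "$+2$" becomes "$+5$") then gives $D^\alpha/n \leq \bigl(\tfrac{C}{\alpha-1}+5+\tfrac{i_1}{\sqrt[\alpha]n}\bigr)^{\alpha}$. Summing the two contributions reproduces exactly the right-hand side of the defining lower bound on $C'$.

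The main obstacle is the constant bookkeeping: to check that the $C'$ fixed in the preamble is large enough to absorb not only the two dominant contributions $\tfrac{C}{\alpha-1}$ and $\bigl(\tfrac{C}{\alpha-1}+5+\tfrac{i_1}{\sqrt[\alpha]n}\bigr)^{\alpha}$, but also the residual $C/k$ and any rounding artefacts from Definition~\ref{def:proper}, uniformly in $n$ and $k$. Since $k \geq \sqrt[\alpha]{n/\log n} \to \infty$, the $C/k$ error is asymptotically negligible, and the deliberate gap between the "$+2$" in Proposition~\ref{prop:maxdegree} and the generous "$+5$" written into the definition of $C'$ is the safety margin intended precisely for this purpose. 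Confirming that this margin is sufficient for all admissible $n$ and $k$, and not merely in the asymptotic limit, is the delicate final check.
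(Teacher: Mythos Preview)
Your proposal is correct and follows essentially the same route as the paper: both invoke Proposition~\ref{prop:maxdegree} to restrict attention to $k\leq D$, use the pointwise bound $|V_i|\leq Cn/i^{\alpha}+1$, split the resulting sum into the integral-comparison term and the count $D-k+1$, and then convert everything into a multiple of $n/k^{\alpha-1}$ that is bounded by~$C'$. The only cosmetic difference is that you retain the first-term correction $Cn\,k^{-\alpha}$ in the integral comparison (yielding your residual $C/k$), whereas the paper passes directly from the sum to $Cn\int_k^d x^{-\alpha}\,dx$; your treatment is in fact the more careful one, and the slack you identify between the ``$+2$'' of Proposition~\ref{prop:maxdegree} and the ``$+5$'' in the definition of $C'$ is exactly what the paper uses to close the constants.
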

\begin{proof}
Let $d = \lfloor(\frac C{\alpha - 1} + 2)\sqrt[\alpha]{n} + i_1 + 3\rfloor$. For any $\alpha$-proper power-law graph with $n$ vertices and for any $k$, $\vert V_k\vert\leq Ck^{-\alpha}n + 1$ and by Proposition~\ref{prop:maxvertexproper}, $\vert V_k\vert = 0$ when $k > d$.

Let $k$ be an arbitrary integer between $\sqrt[\alpha]{n/\log n}$ and $n-1$. We need to show that $\sum_{i = k}^{n-1} {\vert V_i\vert} \leq C'(\frac{n}{k^{\alpha-1}})$. It suffices to show this for $k\leq d$. We have:
\begin{align*}
  \sum_{i = k}^{n-1} {\vert V_i\vert} & \leq \sum_{i = k}^d(Ci^{-\alpha}n + 1)
    =    d - k + 1 + Cn\sum_{i = k}^d i^{-\alpha}\\
  & \leq \left(\frac C{\alpha - 1} + \frac{i_1}{\sqrt[\alpha]n} + 5\right)\sqrt[\alpha]{n} + Cn\int_k^d x^{-\alpha}dx\\
  & \leq \left(\frac C{\alpha - 1} + \frac{i_1}{\sqrt[\alpha]n} + 5\right)\sqrt[\alpha]{n} + Cn\left[\frac 1{\alpha - 1}x^{-\alpha+1}\right]_\infty^k\\
  & \leq \left(\left(\frac C{\alpha - 1} + \frac{i_1}{\sqrt[\alpha]n} + 5\right)\left(\frac{\sqrt[\alpha]nd^{\alpha-1}}n\right) + \frac {C}{\alpha - 1}\right)nk^{-\alpha +1}\\
  & \leq \left(\left(\frac C{\alpha - 1} + \frac{i_1}{\sqrt[\alpha]n} + 5\right)\left(\frac C{\alpha-1} + \frac{i_1}{\sqrt[\alpha]n} + 5\right)^{\alpha - 1} + \frac{C}{\alpha - 1}\right)nk^{-\alpha+1} \leq C'nk^{-\alpha+1},
\end{align*}
as desired.
\end{proof}

\section{The Labeling Schemes}
\label{sec:lab_schem}
We now construct algorithms for labeling schemes for $c$-sparse graphs and for the family $\PLB$. Both labeling schemes partition vertices into \emph{thin} vertices which are of low degree and \emph{fat} vertices of high degree. The \emph{degree threshold} for the scheme is the lowest possible degree of a fat vertex. We start with $c$-sparse graphs.
\begin{theorem}\label{sparse-label}
There is a $\sqrt{2cn\log n} + 2\log n + 1$ labeling scheme for $\Sparse$.
\end{theorem}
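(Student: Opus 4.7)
The plan is to realize the fat/thin threshold strategy described at the start of Section~\ref{sec:lab_schem} as concretely as possible, then tune the threshold. Fix $\tau = \sqrt{2cn/\log n}$ and declare a vertex to be \emph{fat} if its degree is at least $\tau$, and \emph{thin} otherwise. Because a $c$-sparse graph has at most $cn$ edges, the sum of degrees is at most $2cn$; summing only over fat vertices gives $F\tau \leq 2cn$, where $F$ denotes the number of fat vertices, hence $F \leq 2cn/\tau = \sqrt{2cn\log n}$. Note the key coincidence $\tau\log n = \sqrt{2cn\log n} = 2cn/\tau$, which is exactly what dictates the choice of $\tau$.

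Next, I assign each vertex a unique identifier in $\{1,\dots,n\}$, encoded in $\lceil \log n\rceil$ bits, and each fat vertex an additional ``fat identifier'' in $\{1,\dots,F\}$, encoded in $\lceil\log F\rceil \leq \log n$ bits. The label of a vertex $v$ then consists of a single type bit distinguishing fat from thin, followed by $v$'s vertex identifier, followed by one of two payloads: if $v$ is thin, the concatenation of the vertex identifiers of all neighbors of $v$, padded with a sentinel to total length $\tau \log n$ bits; if $v$ is fat, the fat identifier of $v$ together with a length-$F$ bit vector whose $k$-th bit records whether $v$ is adjacent to the fat vertex with fat identifier $k$. To decode adjacency of $u$ and $v$, the decoder inspects the two type bits: if at least one of them (say $u$) is thin it scans $u$'s neighbor list for $v$'s vertex identifier; if both are fat it reads the bit of $u$'s bit vector at the position indexed by $v$'s fat identifier. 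Correctness is immediate, since thin vertices list all their neighbors explicitly, while fat--fat edges are recorded symmetrically in both endpoints' bit vectors.

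All that is left is a short bit count. A thin label uses $1 + \log n + \tau \log n \leq \sqrt{2cn\log n} + \log n + 1$ bits, while a fat label uses $1 + \log n + \log F + F \leq \sqrt{2cn\log n} + 2\log n + 1$ bits, invoking $F \leq \sqrt{2cn\log n}$ and $\log F \leq \log n$; the maximum of the two is the advertised bound. I do not anticipate any substantive obstacle here: the only care required is bookkeeping around the ceilings on the logarithms and the fixed padding of the neighbor list so that the label is self-delimiting given $n$ and $c$.
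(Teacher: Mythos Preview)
Your proof is correct and follows essentially the same fat/thin threshold strategy as the paper, with the same threshold $\sqrt{2cn/\log n}$ and the same label structure. The only cosmetic difference is that you give fat vertices a separate fat identifier (costing an extra $\log F\le\log n$ bits in fat labels), whereas the paper arranges the single global identifier so that fat vertices occupy $\{1,\dots,k\}$; this merely shifts which case saturates the $\sqrt{2cn\log n}+2\log n+1$ bound from thin to fat, without affecting the result.
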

\begin{proof}
Let $G=(V,E)$ be an $n$-vertex $c$-sparse graph. Let $f(n)$ be the degree threshold for $n$-vertex graphs; we choose $f(n)$ below. Let $k$ denote the number of fat vertices of $G$, and assign each to each fat vertex a unique identifier between $1$ and $k$. Each thin vertex is given a unique identifier between $k+1$ and $n$.

For a $v\in V$, the first part of the label $\la(v)$ is a single bit indicating whether $v$ is thin or fat followed by a string of $\log n$ bits representing its identifier. If $v$ is thin, the last part of $\la(v)$ is the concatenation of the identifiers of the neighbors of $v$. If $v$ is fat, the last part of $\la(v)$ is a \emph{fat bit string} of length $k$ where the $i$th bit is $1$ iff $v$ is incident to the (fat) vertex with identifier $i$.

Decoding a pair $(\la(u),\la(v))$ is now straightforward: if one of the vertices, say $u$, is thin, $u$ and $v$ are adjacent iff the identifier of $v$ is part of the label of $u$. If both $u$ and $v$ are fat then they are adjacent iff the $i$th bit of the fat bit string of $\la(u)$ is $1$ where $i$ is the identifier of $v$.

Since $|E|\leq cn$, we have $k\leq 2cn/f(n)$. A fat vertex thus has label size $1 + \log n + k\leq 1 + \log n + 2cn/f(n)$ and a thin vertex has label size at most $1 + \log n + f(n)\log n$. To minimize the maximum possible label size, we solve $2cn/x = x\log n$. Solving this gives $x = \sqrt{2cn/\log n}$ and setting $f(n) = \lceil x\rceil$ gives a label size of at most $1 + \log n + (\sqrt{2cn/\log n} + 1)\log n\leq 1 + 2\log n + \sqrt{2cn\log n}$.
\end{proof}


By Proposition~\ref{prop:powerlawsparse}, graphs in $\PLC$ are sparse for $\alpha > 2$. This gives a label size of $O(\sqrt{n\log n})$ with the labeling scheme in Theorem~\ref{sparse-label}. We now show that this label can be significantly improved, by constructing a labeling scheme for $\PLB$ which contains $\PLC$.

\begin{theorem}\label{prop:labelingMain}
 There is a $\sqrt[\alpha]{C'n}(\log n)^{1 - 1/\alpha} + 2\log n + 1$ labeling scheme for $\PLB$.
\end{theorem}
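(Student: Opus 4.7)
The plan is to mimic the thin/fat partition of Theorem~\ref{sparse-label} and optimize the degree threshold using the structural property that defines $\PLB$ in Definition~\ref{def:general-family}.

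First, I would set up labels exactly as in Theorem~\ref{sparse-label}: pick a degree threshold $f(n)$ (to be determined), call vertices of degree at least $f(n)$ \emph{fat} and the rest \emph{thin}, give the $k$ fat vertices identifiers in $\{1,\dots,k\}$ and the remaining vertices identifiers in $\{k+1,\dots,n\}$. Each label starts with a type bit and a $\log n$-bit identifier. A thin vertex appends the identifiers of its at most $f(n)$ neighbours, using $f(n)\log n$ bits. A fat vertex appends a length-$k$ bit string whose $i$th bit indicates adjacency to the fat vertex with identifier $i$. Decoding is identical to Theorem~\ref{sparse-label}: if at least one of $u,v$ is thin, scan its neighbour list; otherwise, look up the relevant bit in the fat bit string.

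The new ingredient is the bound on $k$. Here I would choose $f(n) = \lceil\sqrt[\alpha]{C'n/\log n}\rceil$. Because $C'\geq 1$, this threshold lies above $\sqrt[\alpha]{n/\log n}$, so Definition~\ref{def:general-family} applies and yields
\begin{equation*}
k \;=\; \sum_{i = f(n)}^{n-1} |V_i| \;\leq\; \frac{C'n}{f(n)^{\alpha-1}}.
\end{equation*}
Consequently, a fat label uses at most $1 + \log n + C'n/f(n)^{\alpha-1}$ bits while a thin label uses at most $1 + \log n + f(n)\log n$ bits.

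Finally I would balance the two by solving $C'n/x^{\alpha-1} = x\log n$, i.e. $x^\alpha = C'n/\log n$, which is exactly the choice of $f(n)$ above. Substituting gives a maximum label size of
\begin{equation*}
1 + \log n + f(n)\log n \;\leq\; 1 + 2\log n + \sqrt[\alpha]{C'n}\,(\log n)^{1-1/\alpha},
\end{equation*}
as claimed. The main obstacle is really the first step: one has to notice that the defining inequality of $\PLB$ is tailored precisely so that summing $|V_i|$ above any threshold $k$ exceeding $\sqrt[\alpha]{n/\log n}$ gives the $C'n/k^{\alpha-1}$ bound needed to control the number of fat vertices; once this is in hand, the analysis is a direct rebalancing of the sparse-graph argument.
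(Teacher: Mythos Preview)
Your proposal is correct and follows essentially the same argument as the paper: the same thin/fat partition and label format as in Theorem~\ref{sparse-label}, the bound $k\le C'n/f(n)^{\alpha-1}$ from Definition~\ref{def:general-family}, and the same optimization $x\log n = C'n/x^{\alpha-1}$ yielding $f(n)=\lceil\sqrt[\alpha]{C'n/\log n}\rceil$. Your explicit remark that $C'\ge 1$ forces $f(n)\ge\sqrt[\alpha]{n/\log n}$ (so that Definition~\ref{def:general-family} indeed applies) is a detail the paper leaves implicit.
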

\begin{proof}
The proof is very similar to that of Theorem~\ref{sparse-label}. We let $f(n)$ denote the degree threshold. If we pick $f(n)\geq \sqrt[\alpha]{n/\log n}$ then by Definition~\ref{def:general-family}  there are at most $C'n / f(n)^{\alpha -1}$ fat vertices. Defining labels in the same way as in Theorem~\ref{sparse-label} gives a label size for thin vertices of at most $1 + \log n + f(n)\log n$ and a label size for fat vertices of at most
$1 + \log n + C'n / f(n)^{\alpha -1}$.
We minimize by solving
$x \log n = C'n / x^{\alpha -1}$, giving $x = \sqrt[\alpha]{C'n/\log n}$. Setting $f(n) = \lceil x\rceil$ gives a label size of at most $\sqrt[\alpha]{C'n}(\log n)^{1 - 1/\alpha} + 2\log n + 1$.
\end{proof}

\section{Lower Bounds}
We now derive lower bounds for the label size of any  labeling schemes for both $\Sparse$ and  $\PLC$.
Our proofs rely on  Moon's~\cite{moon1965minimal} lower bound of  $\lfloor n/2 \rfloor$ bits for labeling scheme for general graphs.
We first show that the upper bound achieved for sparse graphs is close to the best possible.
The following proposition is essentially a more precise version of the lower bound suggested by Spinrad~\cite{spinrad2003efficient}.
\begin{proposition}
Any  labeling scheme for $\Sparse$ requires  labels of size at least $\left\lfloor\frac{\sqrt{cn}}2\right\rfloor$ bits.
\end{proposition}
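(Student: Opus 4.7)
The plan is to reduce the general-graph lower bound of Moon to the sparse setting by an embedding argument. Specifically, I would observe that any $m$-vertex graph $H$ can be realized as an $n$-vertex $c$-sparse graph by adding $n - m$ isolated vertices, provided that the number of possible edges on $m$ vertices does not exceed the sparsity budget $cn$. Since $\binom{m}{2} \leq m^2/2$, the condition $m^2/2 \leq cn$ suffices, and this is satisfied by taking $m = \lfloor \sqrt{cn} \rfloor$.

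Given this embedding, I would argue that any adjacency labeling scheme $\mathcal{L}$ for $\mathcal{S}_{c,n}$ induces an adjacency labeling scheme for arbitrary graphs on $m$ vertices: simply apply $\mathcal{L}$ to the padded graph, keep only the labels of the $m$ non-isolated vertices, and use the same decoder. Adjacency among these $m$ vertices in the embedded graph agrees with adjacency in $H$, so correctness is preserved. The label size of the induced scheme is at most the maximum label size of $\mathcal{L}$ on $n$-vertex $c$-sparse graphs.

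Moon's theorem then states that any adjacency labeling scheme for the family of all $m$-vertex graphs has maximum label size at least $\lfloor m/2 \rfloor$. Substituting $m = \lfloor \sqrt{cn} \rfloor$ yields the desired lower bound of $\lfloor \sqrt{cn}/2 \rfloor$ bits.

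I do not anticipate any serious obstacle: the only nontrivial point is to verify that the embedding is well-defined, i.e., that $\binom{\lfloor\sqrt{cn}\rfloor}{2}\leq cn$, which follows from the crude estimate $m(m-1)/2 \leq m^2/2 \leq cn/2$. If one wanted a slightly tighter constant one could instead take $m$ to be the largest integer with $m(m-1)/2 \leq cn$, but since the stated bound uses $\sqrt{cn}/2$ rather than $\sqrt{cn/2}$, the cruder choice is the natural one and gives exactly the claimed expression.
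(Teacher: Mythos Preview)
Your proposal is correct and takes essentially the same approach as the paper: pad an arbitrary graph with isolated vertices to make it $c$-sparse, then invoke Moon's lower bound on the induced sublabeling. The paper phrases this by contradiction and pads an $n$-vertex general graph up to $\lfloor n^2/c\rfloor$ vertices, whereas you fix the target size $n$ and embed graphs on $m=\lfloor\sqrt{cn}\rfloor$ vertices; these are the same reduction with the two size parameters swapped, and your direction matches the stated proposition more cleanly.
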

\begin{proof}
Assume for contradiction that there exists a labeling scheme  assigning labels of size strictly less than $\lfloor\frac{\sqrt{cn}}2\rfloor$.
Let $G$ be an $n$-vertex graph. Let $G'$ be the graph resulting by adding $\left\lfloor\frac{n^2}{c}\right\rfloor - n$ isolated vertices to $G$, and note that now $G'$ is $c$-sparse. The graph $G$ is an induced subgraph of  $G'$.
It now follows that the vertices of $G$ have  labels of size strictly less than 
$\left\lfloor\frac{\sqrt{c\lfloor n^2/c\rfloor}}2\right\rfloor \leq n/2$ bits. As $G$ was arbitrary, we obtain a contradiction.
\end{proof}

\subsection{Lower bound for power-law graphs}
In the remainder of this section we are assuming that $\alpha>2$ and  prove the following:
\begin{theorem}\label{centralLowerBound}
For all $n$, any labeling scheme for $n$-vertex graphs of $\PLB$ requires label size $\Omega(\sqrt[\alpha]{n})$.
\end{theorem}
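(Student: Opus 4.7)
The plan is to reduce to Moon's lower bound of $\lfloor n/2\rfloor$ for general graphs via an embedding argument. Since $\PLC\subseteq\PLB$ by Proposition~\ref{prop:Contained}, any labeling scheme for $\PLB$ is a labeling scheme for $\PLC$, so it suffices to establish the $\Omega(\sqrt[\alpha]{n})$ lower bound for $\PLC$. The target reduction is the following: pick $m = c\sqrt[\alpha]{n}$ for a small constant $c>0$ chosen below; show that every simple graph $H$ on $m$ vertices occurs as an induced subgraph of some $G\in\PLC$ on a distinguished set of $m$ vertices whose position in $G$ is determined only by $n$ (not by $H$). Given such an embedding, any $s(n)$-bit scheme for $\PLC$ restricts to an $s(n)$-bit adjacency scheme for arbitrary $m$-vertex graphs, and Moon's bound then forces $s(n)\geq \lfloor m/2\rfloor = \Omega(\sqrt[\alpha]{n})$.

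The construction of the embedding proceeds by reserving the top of the $\alpha$-proper degree spectrum for the image of $H$. By Proposition~\ref{prop:maxdegree} the maximum required degree in an $n$-vertex graph in $\PLC$ is $\Theta(\sqrt[\alpha]{n})$, and by the same counting argument the number of slots with required degree at least $k$ is $\Theta(n/k^{\alpha-1})$. Choosing $c$ small enough guarantees that the $m$ highest-degree slots all have target degree at least $m-1$; call these targets $d_1\geq d_2\geq\cdots\geq d_m$. Place the vertices $v_1,\ldots,v_m$ of $H$ in these slots, preserving all edges of $H$, and for each $v_j$ add $d_j-\deg_H(v_j)\geq 0$ auxiliary edges going to filler vertices outside the embedded copy. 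The remaining $n-m$ filler vertices are then assigned the low-degree targets $|V_1|,|V_2|,\ldots$ mandated by Definition~\ref{def:proper}, and the auxiliary edges are hooked into them in a way that realizes those targets exactly.

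The main obstacle is the last step: realizing the low-degree portion of the $\alpha$-proper degree sequence as an actual graph while respecting the padding edges already committed to the embedded copy of $H$. This is essentially a bipartite degree-sequence realization problem between the $m$ top vertices (with residual demand $d_j-\deg_H(v_j)$) and the filler vertices (with prescribed degrees drawn from $|V_1|,|V_2|,\ldots$). The total residual demand is at most $\sum_j d_j = O(m\cdot\sqrt[\alpha]{n}) = O(n^{2/\alpha}) = o(n)$, which is much less than the mandated total filler degree $\sum_{k\geq 1} k|V_k| = \Theta(n)$, so there is abundant capacity. I would handle realizability by a greedy Gale--Ryser style procedure that processes degree classes $|V_i|$ in decreasing order of $i$, assigning auxiliary edges first and then completing each class with edges among the fillers; the small amounts of extra slack built into items~(1) and~(2) of Definition~\ref{def:proper} (an $O(\sqrt[\alpha]{n})$ margin on $|V_1|$ and a $+1$ margin on $|V_2|$) are precisely what absorb the rounding produced by the auxiliary edges, which is why Definition~\ref{def:proper} was stated with that asymmetry.

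Finally, to identify the $m$ embedded vertices from their labels alone (so that the induced labeling for $H$ is well-defined), I would append an $O(\log m)=O(\log n)$ bit index to each label indicating its rank among the top-degree slots; since the theorem only claims $\Omega(\sqrt[\alpha]{n})$, this additive logarithmic overhead is absorbed. Combining everything, Moon's lower bound applied to the $m$-vertex induced subgraph yields label size $\Omega(m)=\Omega(\sqrt[\alpha]{n})$, completing the proof of Theorem~\ref{centralLowerBound}.
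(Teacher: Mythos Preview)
Your proposal is correct and follows essentially the same route as the paper: reduce to Moon's lower bound by showing that every graph $H$ on $\Theta(\sqrt[\alpha]{n})$ vertices embeds as an induced subgraph into some $G\in\PLC$, with $H$ occupying the high-degree ``slots'' of the $\alpha$-proper degree sequence. The paper carries out the realizability step by an explicit three-phase greedy construction (first saturating the embedded vertices against $V'\setminus V_1$, then pairing up vertices within $V'$, then within $V_1$, using the slack in Definition~\ref{def:proper} to absorb a final parity defect), whereas you invoke a Gale--Ryser style argument; both work, and you correctly identify that the asymmetric slack in items (1)--(2) of Definition~\ref{def:proper} is there precisely to make this step go through.

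One small correction: the extra $O(\log m)$ identification bits you propose appending are unnecessary. You are proving a lower bound, so you start from a hypothetical $s(n)$-bit scheme for $\PLB$; the derived scheme for $m$-vertex graphs simply restricts those labels to $V(H)$ and reuses the same decoder, which answers adjacency in $G$ and hence in $H$ since $H$ is induced. No vertex needs to know it is ``in $H$'' for the decoder to work.
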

More precisely, we present a lower bound for $\PLC$ which is contained in $\PLB$. Let $n\in\mathbb N$ be given and let $H = (V(H),E(H))$ be an arbitrary graph with $i_1$ vertices where $i_1 = \Theta(\sqrt[\alpha]n)$ is defined as in Section~\ref{Sec:GraphFamilies}. We show how to construct an $\alpha$-proper power-law graph $G = (V,E)$ with $n$ vertices that contains $H$ as an induced subgraph. Observe that a labeling of $G$ induces a labeling of $H$. As $H$ was chosen arbitrarily and as any labeling scheme for $k$-vertex graphs requires $\lfloor i_1/2 \rfloor$ label size in the worst case, Theorem~\ref{centralLowerBound} follows if we can show the existence of $G$.

We construct $G$ incrementally where initially $E = \emptyset$. Partition $V$ into subsets $V_1,\ldots,V_n$ as follows. The set $V_1$ has size $\lfloor Cn\rfloor - i_1$. For $i = 2,\ldots,i_1-1$, $V_i$ has size $\lfloor Cn/i^\alpha\rfloor$. Letting $n' = \sum_{i = 1}^{i_1-1} \vert V_i\vert$, we set the size of $V_i$ to $1$ for $i = i_1,\ldots,i_1+n-n'-1$ and the size of $V_i$ to $0$ for $i = i_1+n-n',\ldots,n$, thereby ensuring that the sum of sizes of all sets is $n$. Observe that $\sum_{i = 1}^{i_1}\lfloor Cn/i^\alpha\rfloor\leq n$ so that $n'\leq n - i_1$, implying that $n-n'\geq i_1$. Hence we have at least $i_1$ size $1$ subsets $V_{i_1},\ldots,V_{i_1+n-n'-1}$ in each of which the vertex degree allowed by Definition~\ref{def:proper} is at least $i_1$.

Let $v_1,\ldots,v_{i_1}$ be an ordering of $V(H)$, form a set $V_H\subseteq V$ of $i_1$ arbitrary vertices from the sets $V_{i_1},\ldots,V_{i_1+n-n'-1}$, and choose an ordering $v_1',\ldots,v_{i_1}'$ of $V_H$. For all $i,j\in\{1,\ldots,i_1\}$, add edge $(v_i',v_j')$ to $E$ iff $(v_i,v_j)\in E(H)$. Now, $H$ is an induced subgraph of $G$ and since the maximum degree of $H$ is $i_1-1$, no vertex of $V_i$ exceeds the degree bound allowed by Definition~\ref{def:proper} for $i = 1,\ldots,n$.

We next add additional edges to $G$ in three phases to ensure that it is an $\alpha$-proper power law graph while maintaining the property that $H$ is an induced subgraph of $G$. For $i = 1,\ldots,n$, during the construction of $G$ we say that a vertex $v\in V_i$ is \emph{unprocessed} if its degree in the current graph $G$ is strictly less than $i$. If the degree of $v$ is exactly $i$, $v$ is \emph{processed}.

\paragraph{Phase $1$:}
Let $V' = V\setminus (V_1\cup V_H)$. Phase $1$ is as follows: while there exists a pair of unprocessed vertices $(u,v)\in V'\times V_H$, add $(u,v)$ to $E$.

When Phase $1$ terminates, $H$ is clearly still an induced subgraph of $G$. Furthermore, all vertices of $V_H$ are processed. To see this, note that the sum of degrees of vertices of $V_H$ when they are all processed is $O(i_1^2) = O(n^{2/\alpha})$ which is $o(n)$ since $\alpha > 2$. Furthermore, prior to Phase $1$, each of the $\Theta(n)$ vertices of $V'$ have degree $0$ and can thus have their degrees increased by at least $1$ before being processed.

\paragraph{Phase $2$:}
Phase $2$ is as follows: while there exists a pair of unprocessed vertices $(u,v)\in V'\times V'$, add $(u,v)$ to $E$. At termination, at most one vertex of $V'$ remains unprocessed. If such a vertex exists we process it by connecting it to $O(\sqrt[\alpha]n)$ vertices of $V_1$; as $\vert V_1\vert = \Theta(n)$ there are enough vertices of $V_1$ to accomodate this. Furthermore, prior to adding these edges, all vertices of $V_1$ have degree $0$, and hence the bound allowed for vertices of this set is not exceeded.

\paragraph{Phase $3$:}
In Phase $3$, we add edges between pairs of unprocessed vertices of $V_1$ until no such pair exists. If no unprocessed vertices remain we have the desired $\alpha$-proper power law graph $G$. Otherwise, let $w\in V_1$ be the unprocessed vertex of degree $0$. We add a single edge from $w$ to another vertex $w'$ of $V_1$, thereby processing $w$ and moving $w'$ from $V_1$ to $V_2$. Note that the sizes of $V_1$ and $V_2$ are kept in their allowed ranges due to the first two conditions in Definition~\ref{def:proper}. This proves Theorem~\ref{centralLowerBound}.
\section{Scale Free  Graphs from Generative Models}\label{Sec:ScaleFree}
The Barab{\'a}si-Albert (BA) model is a well-known generative model for power-law graphs that, roughly, grows a graph in a sequence of time steps by
inserting a single vertex at each step and attaching it to $m$ existing vertices with probability weighted by the degree of each existing vertex \cite{barabasi1999emergence}. The BA model
generates graphs that asymptotically have a power-law degree distribution ($\alpha = 3$) for low-degree nodes \cite{DBLP:journals/rsa/BollobasRST01}.
Graphs created by the BA model have low arboricity (the arboricity of a graph is the minimum number of spanning forests needed to cover its edges.)~\cite{goel2006bounded}; we use
that fact to prove the following highly efficient labeling scheme. 


\begin{proposition}\label{Th:baLabeling}
The family of graphs generated by the BA model has an $O(m \log n)$ adjacency labeling scheme.
\end{proposition}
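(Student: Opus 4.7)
The plan is to exploit the explicit temporal structure of the BA model (which in particular certifies the bounded-arboricity property of~\cite{goel2006bounded}). In the BA process, vertices are inserted one at a time, and each newly inserted vertex attaches exactly $m$ edges to previously existing vertices. Thus, if we orient every edge from the more recently inserted endpoint toward the older one, every vertex has out-degree at most $m$. This orientation partitions the edges into $m$ forests (each vertex contributing one edge to each of its out-neighbors), witnessing that the arboricity is at most $m$.

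From this oriented structure the labeling scheme is straightforward. First, assign each vertex a unique identifier from $\{1,\ldots,n\}$ according to its insertion time, encoded in $\lceil\log n\rceil$ bits. For each vertex $v$, define $\la(v)$ as the concatenation of $v$'s own identifier with the identifiers of its (at most $m$) out-neighbors, i.e.\ the $m$ older vertices to which $v$ attached when it was inserted. The list can be delimited in the standard way with a small constant overhead, or simply padded to length exactly $m$ using a dummy identifier. To decode $(\la(u),\la(v))$ we scan both adjacency lists: $u$ and $v$ are adjacent iff the identifier of the older endpoint appears in the out-neighbor list of the younger endpoint, which is exactly the orientation property above.

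The label size is $(m+1)\lceil\log n\rceil + O(1) = O(m\log n)$, giving the claimed bound. Correctness follows from the fact that every edge of the graph is recorded at its younger endpoint, so the decoder never misses an adjacency and never reports a false one (two distinct identifiers cannot both lie in each other's out-neighbor lists by construction).

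The only substantive point to be careful with is that the scheme needs to know, for each vertex, the identifier set of its out-neighbors; this information is available directly from the BA generative process, and requires no randomness beyond what the model itself provides. The main (mild) obstacle is simply ensuring that identifiers of $O(\log n)$ bits suffice and that list-encoding overhead remains $O(m\log n)$; both are handled by fixing the insertion-order numbering and, if desired, padding to exactly $m$ entries so no separators are needed.
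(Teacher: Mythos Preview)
Your argument is correct, but it takes a different route from the paper's own proof. The paper argues purely from the final graph: it cites that BA graphs have arboricity at most $m$, invokes a near-linear-time algorithm that decomposes any graph into at most twice the optimal number of forests, and then applies the Alstrup--Rauhe tree labeling scheme of size $\log n + O(\log^* n)$ to each of the resulting $2m$ forests, obtaining $2m(\log n + O(\log^* n))$ bits. Your approach instead exploits the generative history directly: the insertion order gives an acyclic orientation with out-degree at most $m$, so each vertex can simply record the identifiers of the $\le m$ older neighbors it attached to. This is simpler, yields a better constant ($(m+1)\lceil\log n\rceil$ rather than roughly $2m\log n$), and avoids the detour through arboricity approximation and tree labeling; the trade-off is that the encoder must have access to the insertion order, not just the final graph. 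The paper in fact mentions exactly your scheme in the paragraph immediately following its proof, noting that if ``the encoder operates at the same time as the creation of the graph'' one gets an $m\log n$ scheme by storing the $m$ attachment targets---so your route is one the authors acknowledge but did not use as the primary proof.

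One minor quibble: your sentence ``This orientation partitions the edges into $m$ forests'' is not quite what the orientation itself gives; you get out-degree at most $m$, and to obtain $m$ forests you still need to colour the out-edges at each vertex and observe that each colour class is acyclic by the time ordering. This is immediate, and in any case you never actually use the forest decomposition in the labeling scheme, so it does not affect the argument.
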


\begin{proof}
Let  $G=(V,E)$  be an $n$-vertex graph resulting by the construction  by the BA model with some parameter $m$ (starting from some graph $G_0 = (V_0,E_0)$ with $\vert V_0 \vert \ll n$).
While it is not known how to compute the   arboricity of a graph efficiently, it is possible in near-linear time to compute a partition of $G$ with  at most twice\footnote{More precisely, for any $\epsilon \in (0,1)$  there exist an $O(|E(G)| / \epsilon)$ algorithm~\cite{kowalik2006approximation} that computes such partition using at most $(1+ \epsilon)$ times more forests than the optimal.} the number of forests in comparison to the optimal~\cite{arikati1997efficient}.
We can thus decompose the graph to $2m$ forests in near linear time and label each forest using Alstrup and Rauhe's~\cite{Alstrup02} $\log n + O(\log^* n)$ labeling scheme for trees,  and achieve a $2m (\log n + O(\log^* n))$ labeling scheme for $G$.
\end{proof}

Note that if the encoder operates at the same time as the creation of the graph, Proposition \ref{Th:baLabeling} can be strengthened to yield an an $m \log n$ labeling scheme: simply store the  
 identifiers of the $m$ vertices attached with every vertex insertion.
Theorem~\ref{centralLowerBound} and Proposition \ref{Th:baLabeling} strongly suggest that, for each sufficiently large $n$, the number of  power-law graphs with $n$ vertices  is vastly larger than the number of graphs with $n$ vertices created by the  BA model.  In contrast, other generative models such as   Waxman~\cite{waxman1988routing}, N-level Hierarchical~\cite{calvert1997modeling}.
and Chung's~\cite{chung2006complex} (Chapter 3)  do not seem to have an obvious smaller label size than the one in Proposition~\ref{prop:labelingMain}.

\section{Experimental Study}

We now  perform an experimental evaluation of our labeling scheme on a number of large networks.
The source code for our experiments can be found at: \url{www.diku.dk/\~simonsen/suppmat/podc15/powerlaw.zip}

\subsection{Experimental Framework}\label{Sec:Experimental}
\paragraph{Performance Indicators.}
Recall that our labeling scheme consists of two ideas: separation of the nodes according to some threshold, and selecting a threshold depending on the power-law parameter $\alpha$.
In our labeling scheme, the threshold  is $\lceil \sqrt[\alpha]{C n/(\alpha-1)} \rceil$. We call this the \emph{predicted} threshold; it is an approximation to the theoretically optimal threshold choice when degree distributions follow the power-law curve $k\mapsto Cn/k^\alpha$ perfectly. The approximation uses integration similar to what is done in, e.g., the proof of Proposition~\ref{prop:Contained}.
For a concrete
graph $G$, it is conceivable that some other threshold $n_0$, different from the predicted threshold, would yield a labeling scheme with smaller size. 
Let $\max_t(n_0)$ and $\max_f(n_0)$ be the maximum label sizes of thin, resp.\ fat vertices in $G$ when the threshold is set at $1 \leq n_0 \leq n-1$. Clearly
the maximum label size with the threshold $n_0$ is $\max\{\max_t(n_0),\max_f(n_0)\}$. Observe further
that $\max_t(n_0)$ and $\max_f(n_0)$ are monotonically increasing, resp.\ decreasing functions of $n_0$. Hence,
the $n_0$ for which $\max\{\max_t(n_0),\max_f(n_0)\}$ is minimal is where the curves of $\max_t(n_0)$ and $\max_f(n_0)$ intersect. We call this $n_0$ the \emph{empirical} threshold.
We set up the following performance indicators to gauge (1) the difference in label size with predicted and empirical threshold, and (2) the label size obtained by our labeling scheme on several data sets, compared to other labeling schemes.

\emph{Performance Indicator 1:} We measure the label sizes for the labeling schemes with the predicted and empirical thresholds. We interpret a small relative difference between these label sizes means that the predicted threshold can achieve small label sizes without examining the global properties of the network other than the power-law parameter $\alpha$. 
 
\emph{Performance Indicator 2:} We  measure the label sizes attained by our labeling schemes to other labeling schemes, namely state-of-the art labeling schemes for the classes of bounded-degree, sparse and general graphs using the  labeling schemes suggested in \cite{adjiashvili2014labeling},  Theorem~\ref{sparse-label} and \cite{alstrup2014adjacency}. We interpret small label sizes for our scheme, especially in comparison with ``small'' classes like the class of bounded-degree graphs, as a sign that our labeling scheme efficiently utilizes the extra information about the graphs: namely that their degree distribution is reasonably well-approximated by a power-law.

\paragraph{Test Sets.}
We employ both real-world and synthetic data sets. 

The six \emph{synthetic} data sets are created by first generating a power-law degree sequence using the method of Clauset et al.~\cite[App.\ D]{clauset2009power}, subsequently constructing a corresponding graph for the sequence using the Havel-Hakimi method~\cite{hakimi1962realizability}. 
We use the range $2< \alpha < 3$ as suggested in~\cite{clauset2009power} as this range of $\alpha$ occurs most commonly in modeling of real-world networks. We generate graphs of $300,000$ and $1M.$ vertices denoted  s300$^{\alpha=x}$  and s1M$^{\alpha=x}$  respectively, for $x \in \{2.2,2.4,2.6,2.8\}$.

The three \emph{real-world} data sets originate  from articles that found the data to be well-approximated by a power-law. 
The \textsc{www} data set  \cite{albert1999internet} contains information on links between webpages within the nd.edu domain. 
The \textsc{enron} data set ~\cite{leskovec2009community}  contains email communication between  Enron employees (vertices are email addresses; there is a link between two addresses
if a mail has been sent between them).
The \textsc{internet} data set~\cite{newman} provides a snapshot the Internet structure at the level of  autonomous systems, reconstructed from BGP tables. 
For all of these sets, we consider the underlying simple, undirected graphs. For each set, standard maximum likelihood methods were used to compute the parameter
$\alpha$ of the best-fitting power-law curve \cite{clauset2009power}. Additional information on the data sets can be found in Table~\ref{t:datasets}.

\begin{table}[!ht]
\centering
\small
\begin{tabular}{lccccl}\hline
\multicolumn{6}{c}{Real-Life}\\\hline
Data set  & $\vert V \vert$ & $\vert E\vert$ & $\alpha$  & $\Delta_{\max}$ & Source\\\hline
\textsc{www}      & 325,729        & 1,117,563     & 2.16 & 10,721            & \cite{albert1999internet}\\
\textsc{enron}    &  36,692        &   183,830      & 1.97    &1,383         & \cite{leskovec2009community}\\
\textsc{internet} &  22,963        &    48,436      & 2.09     & 2,390        & \cite{newman}\\\hline

\multicolumn{6}{c}{Synthetic}\\\hline
s1M$^{\alpha=2.4}$    & 1,000,000       & 1,127,797      & 2.4    & 42,683 &-- \\
s1M$^{\alpha=2.6}$    & 1,000,000       & 878,472        & 2.6    & 12,169 &-- \\
s1M$^{\alpha=2.8}$    & 1,000,000       & 751,784         & 2.8   & 1,692  &-- \\
s300$^{\alpha=2.2}$    & 300,000        & 491,926        & 2.2    & 10,906 & --\\
s300$^{\alpha=2.4}$    & 300,000        & 327,631        & 2.4    & 3,265 & --\\
s300$^{\alpha=2.6}$    & 300,000        & 261,949        & 2.6    & 1,410 & --\\
s300$^{\alpha=2.8}$    & 300,000        & 227,247        & 2.8    & 1,842 & --\\\hline
\end{tabular}
\caption{Data sets and their properties. All graphs are undirected and simple. $\Delta_{\max}$ is the maximum degree of any vertex in the data set.}
\label{t:datasets}
\end{table}

\subsection{Findings}
Figure \ref{fig:findings} shows the distribution of maximum label sizes for one synthetic and one real-world data set. The maximum label size
for the predicted and empirical thresholds as well as upper bounds on the label sizes from different label schemes in the literature can be seen in Table \ref{t:labelsizes} for two synthetic
data sets and all three real-world data sets. 
Plots for the remaining data sets can be found in Appendix~\ref{App:ExpRes}.

\begin{figure}[!ht]
\centering
\subfloat[\small syn300$^{\alpha=2.2}$]{
    \includegraphics[width=0.5\textwidth]{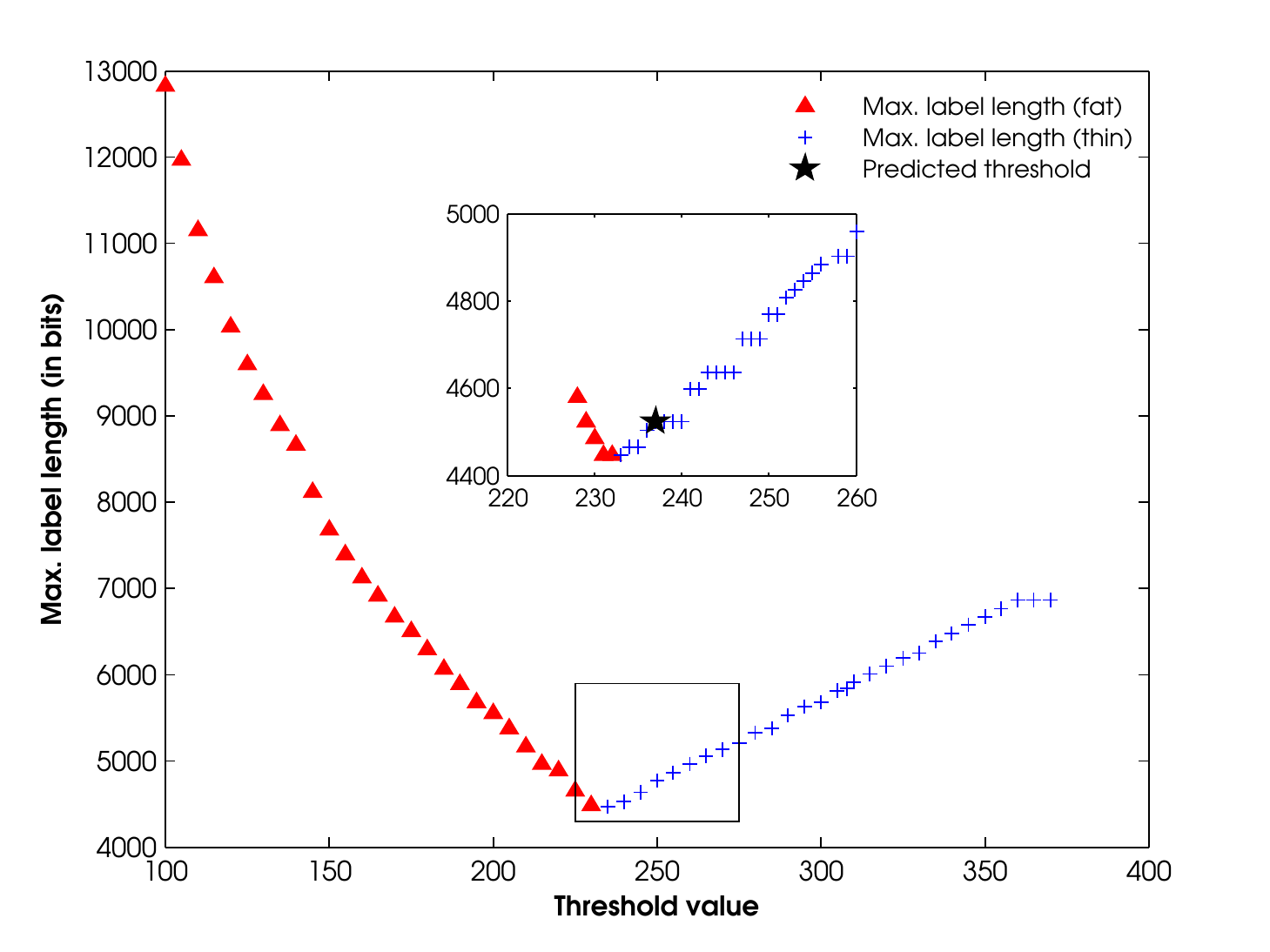}
    \label{f:fsyn300k}
}\hspace*{-2.5em}
\subfloat[\small \textsc{enron}]{
    \includegraphics[width=0.5\textwidth]{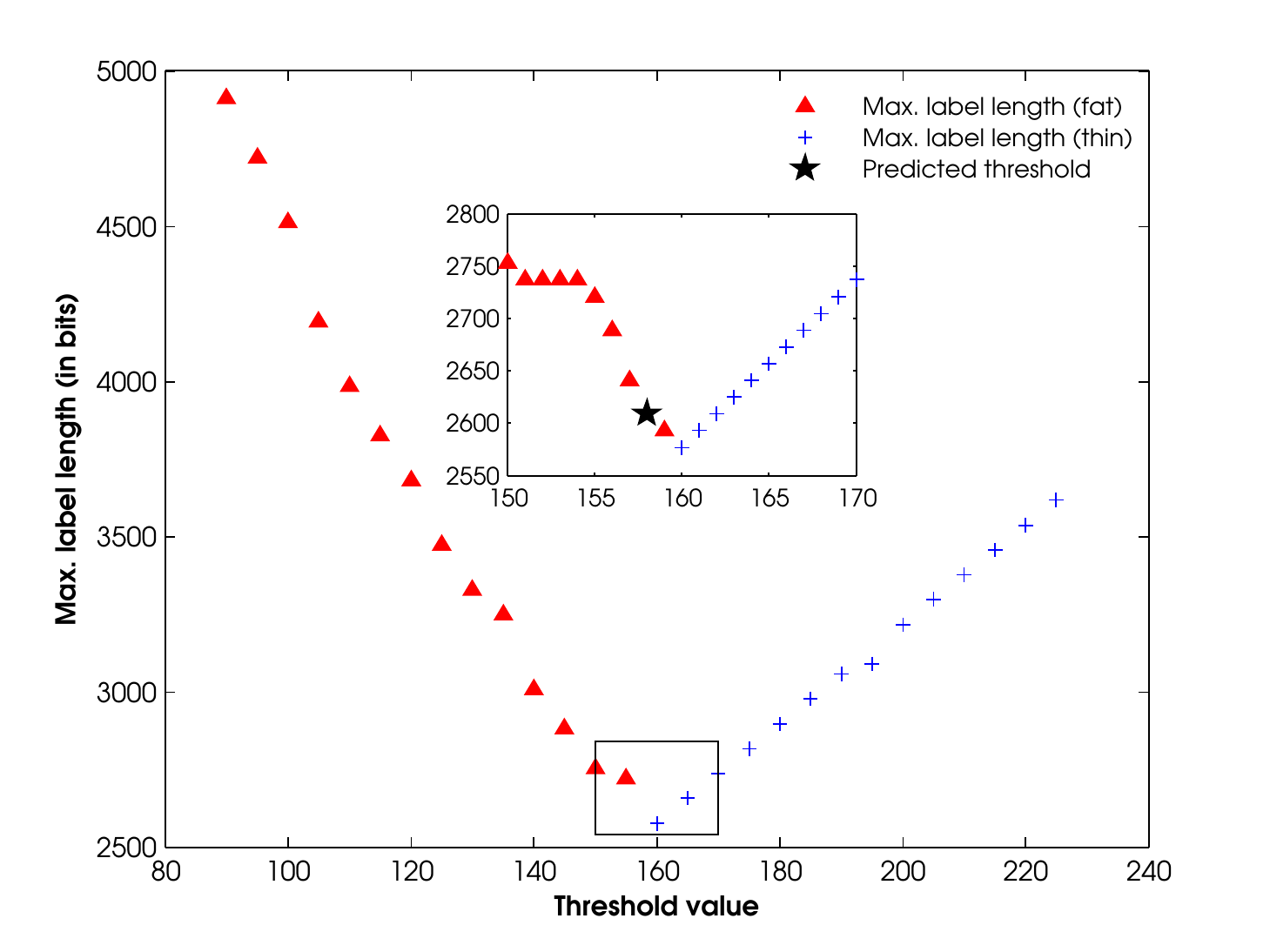}
    \label{f:fenron}
}%
\caption{Maximum label sizes of different threshold values for the   syn300$^{\alpha=2.2}$ and \textsc{enron} data sets.
The triangles and crosses represent that for the tested threshold the largest label belong to fat, resp. thin node. The star indicate the position of the predicted threshold.}
\label{fig:findings}%
\end{figure}

Table~\ref{t:labelsizes}  shows
 the maximum label sizes achieved using different labeling schemes on our data sets. ``Predicted'' shows the experimental maximum label size obtained by running our scheme on the graphs, ``Empirical'' is the label size attained by using the empirical threshold. The remaining columns show non-experimental upper bounds for different label schemes: ``Bound'' is the upper bound guaranteed in Theorem~\ref{prop:labelingMain}, ``$C$-sparse'' is  the labeling scheme for sparse graphs defined in Theorem~\ref{sparse-label}, ``BD'' is the $\lceil \frac{\Delta}{2} \rceil \lceil \log n\rceil$ bounded degree graph  labeling of~\cite{adjiashvili2014labeling}, and AKTZ is the $\lceil n/2\rceil+6$ general graph  labeling of~\cite{alstrup2014adjacency}.
Both ``Empirical'' and  ``Bound'' using simple concatenation of labels to represent the fat bit string\footnote{Our labeling schemes introduced in this paper all make use of a succinctly represented ``fat bit string''; for our experiments, we use simple concatenation of labels instead of a bit string; this incurs a $(\log n)/\alpha$ factor on the label size, but simplifies the implementation.}.

\begin{table}[!ht]
\centering
\small
\begin{tabular}{l|llllll}
Data set             &Predicted & Empirical & Bound     &  $C$-sparse & BD \cite{adjiashvili2014labeling} & AKTZ \cite{alstrup2014adjacency} \\\hline
s1M$^{\alpha=2.4}$  &$4,841$    &$4,821$    & $25,012 $ &$30,079$     &$426,820$ &$500,006$\\\hline
s1M$^{\alpha=2.6}$  &$3,361$    &$3,201$    & $15,282 $ &$26,551$     &$121,680$ &$500,006$\\\hline
s1M$^{\alpha=2.8}$  &$2,101$    &$2,061$    & $10,081 $ &$24,566$     &$16,920$  &$500,006$\\\hline
s300$^{\alpha=2.2}$ &$4,523$    &$4,447$    & $24,878 $ &$18,885$     &$103,607$ &$150,006$\\\hline
s300$^{\alpha=2.4}$ &$2,775$    &$2,680$    & $14,404 $ &$15,420$     &$31,008$  &$150,006$\\\hline
s300$^{\alpha=2.6}$ &$1,958$    &$1,920$    & $9,151 $  &$13,792$     &$13,395$  &$150,006$\\\hline
s300$^{\alpha=2.8}$ &$1,350$    &$1,312$    & $6,244 $  &$12,849$     &$17,499$  &$150,006$\\\hline
\textsc{www}        &$5,245$    &$3,060$    & $29,225 $ &$28,445$     &$101,840$ &$162,870$ \\\hline
\textsc{enron}      &$2,609$    &$2,577$    & $15,835 $ &$9,735$      &$11,056$  &$18,352$\\\hline
\textsc{internet}   &$1,426$    &$1,156$    & $8,181 $  &$4,700$      &$17,925$  &$11,487$\\\hline 
\end{tabular}
\caption{Label size in bits of labeling schemes. The two leftmost columns are experimental results; the remaining are upper bounds on label sizes computed from the characteristics of the data sets.}
\label{t:labelsizes}
\end{table}

Our findings are as follows. For Performance Indicator (i), our labeling scheme obtains maximum label size at most 3\% larger than what would have been obtained by using the empirical threshold for all synthetic data sets.
This is expected---the synthetic data sets are graphs generated specifically to have power-law distributed degree distribution. For the real-world data sets, the labeling scheme
obtains maximum label size at most 23\% larger than by using the empirical threshold; this larger deviation is likely due to degree distributions of the data sets being close to, but not quite,
power-law distributions due to natural phenomena or noise. E.g., for the \textsc{enron} data set there is sudden drop in frequency between nodes of degree $< 158$ and $\geq 158$.

For Performance Indicator (ii), both our experimental results and theoretical upper bounds for our labeling scheme are several orders of magnitudes lower than for labeling schemes aimed at more general classes of graphs, as expected. Of the more general classes of graphs, it is most interesting to compare the upper bound of bounded degree graphs---the most restrictive class of graphs that both contains the class of power-law graphs and has an efficient labeling scheme described in the literature~\cite{adjiashvili2014labeling}. As seen in Table \ref{t:labelsizes}, the upper bound on our labeling schemes for both power-law graphs and sparse graphs have better upper bounds on label sizes, but only marginally so for data sets with low maximum degree and low values of the power-law parameter $\alpha$, e.g. \textsc{Enron} ($\alpha = 1.97$). 
It is interesting to note that the actual label sizes obtained in the experiments (the two leftmost columns of Table \ref{t:labelsizes}) are substantially lower than the upper bounds, that is,
the labeling scheme performs much better in practice than suggested by theory (down to less than a kilobyte per vertex for all data sets). This phenomenon may be due to the degree distribution of the graphs of the data sets having only minor deviation from a power-law for small vertex degrees; our upper bounds on the label size are derived by using the very rich family $\PLB$ that allows very large deviation from a power-law for degrees between $1$ and $\sqrt[\alpha]{n/\log n} - 1$.

Finally, note that our labeling scheme supports adjacency for \emph{directed} graphs by using one more bit per edge in each label to store the edge orientation. For data sets whose natural interpretation is as a directed graph (e.g., the \textsc{www} set where edges are outgoing and incoming links), the results of Table \ref{t:labelsizes} thus carry over with just one more bit
added to the numbers in the two leftmost columns.

\section{Conclusion and Future Work}

We have devised adjacency labeling schemes for sparse graphs and graphs whose degree distribution approximately follows a power-law distribution. We have proven lower bounds for the class of power-law graphs showing that our labeling scheme is  almost asymptotically optimal. Furthermore, we have shown experimentally that the labeling scheme for power-law graphs obtain
results in practice requiring very little space (labels smaller than a kilobyte per vertex for real-world graphs with several hundreds of thousands of vertices).

\subsection{Future work}

It would be of interest to test the performance of the labeling scheme on more real-world data sets, and in particular investigating \emph{dynamic} labeling schemes on such sets: if vertices can enter and exit the network, labels need to be recomputed efficiently.
As our labeling scheme can be extended to handle directed graphs by using a single bit more per label, it would be interesting to investigate the overhead incurred by distributing the storage of the graph topology to the labels (as per our labeling scheme) compared to the substantial body of work on storing directed power-law graphs directly in main memory (so-called ``web-graph compression'') \cite{guillaume2002efficient,asano2003compact, asano2008efficient,claude2010fast}.
The  label sizes attained in Sec.~\ref{Sec:Experimental} can be reduced by using the succinctly represented ``fat bit string'' as well as an additional rule that prevents storing  an edge in two  labels; doing so would yield a small multiplicative reduction in label size, making our labeling scheme even more practical. 
Labeling schemes for other properties than adjacency may be investigated for power-law graphs, e.g. for distance as has been done for other classes of graphs \cite{alstrup2005labeling}
and briefly considered for power-law graphs in the context of routing algorithms \cite{chen2012compact}.
Finally, labeling schemes for power law graphs can likely be devised for the realistic case where the scheme only has incomplete knowledge of the graph, for example when the expected frequency of vertices of each degree is known, but not the exact frequency of each vertex.\newpage
 \bibliographystyle{abbrv}
\bibliography{lit}

\begin{thebibliography}{10}

\bibitem{abraham2011hub}
I.~Abraham, D.~Delling, A.~V. Goldberg, and R.~F. Werneck.
\newblock A hub-based labeling algorithm for shortest paths in road networks.
\newblock In {\em Experimental Algorithms}, pages 230--241. Springer, 2011.

\bibitem{DBLP:journals/jacm/AchlioptasCKM09}
D.~Achlioptas, A.~Clauset, D.~Kempe, and C.~Moore.
\newblock On the bias of traceroute sampling: Or, power-law degree
  distributions in regular graphs.
\newblock {\em J. {ACM}}, 56(4), 2009.

\bibitem{adjiashvili2014labeling}
D.~Adjiashvili and N.~Rotbart.
\newblock Labeling schemes for bounded degree graphs.
\newblock In {\em Automata, Languages, and Programming}, pages 375--386.
  Springer, 2014.

\bibitem{aiello2001random}
W.~Aiello, F.~Chung, and L.~Lu.
\newblock A random graph model for power law graphs.
\newblock {\em Experimental Mathematics}, 10(1):53--66, 2001.

\bibitem{DBLP:conf/podc/AkellaCKS03}
A.~Akella, S.~Chawla, A.~Kannan, and S.~Seshan.
\newblock Scaling properties of the internet graph.
\newblock In {\em Proceedings of the Twenty-Second {ACM} Symposium on
  Principles of Distributed Computing, {PODC} 2003}, pages 337--346, 2003.

\bibitem{albert1999internet}
R.~Albert, H.~Jeong, and A.-L. Barab{\'a}si.
\newblock Diameter of the world-wide web.
\newblock {\em Nature}, 401(6749):130--131, 1999.

\bibitem{Alon2002universal}
N.~Alon and V.~Asodi.
\newblock Sparse universal graphs.
\newblock {\em J. Comput. Appl. Math.}, 142(1):1--11, May 2002.

\bibitem{alstrup2005labeling}
S.~Alstrup, P.~Bille, and T.~Rauhe.
\newblock Labeling schemes for small distances in trees.
\newblock {\em SIAM J. Disc. Math.}, 19(2):448--462, 2005.

\bibitem{alstrup2014adjacency}
S.~Alstrup, H.~Kaplan, M.~Thorup, and U.~Zwick.
\newblock Adjacency labeling schemes and induced-universal graphs.
\newblock {\em To appear in the 47th symposium on Theory of computing (STOC)},
  2015.

\bibitem{Alstrup02}
S.~Alstrup and T.~Rauhe.
\newblock Small induced-universal graphs and compact implicit graph
  representations.
\newblock In {\em Proceedings of the 43rd Symposium on Foundations of Computer
  Science}, FOCS '02, pages 53--62, Washington, DC, USA, 2002. IEEE Computer
  Society.

\bibitem{arikati1997efficient}
S.~R. Arikati, A.~Maheshwari, and C.~D. Zaroliagis.
\newblock Efficient computation of implicit representations of sparse graphs.
\newblock {\em Discrete Applied Mathematics}, 78(1):1--16, 1997.

\bibitem{asano2003compact}
Y.~Asano, T.~Ito, H.~Imai, M.~Toyoda, and M.~Kitsuregawa.
\newblock Compact encoding of the web graph exploiting various power laws.
\newblock In {\em Advances in Web-Age Information Management}, pages 37--46.
  Springer, 2003.

\bibitem{asano2008efficient}
Y.~Asano, Y.~Miyawaki, and T.~Nishizeki.
\newblock Efficient compression of web graphs.
\newblock In {\em Computing and Combinatorics}, pages 1--11. Springer, 2008.

\bibitem{babai1982graphs}
L.~Babai, F.~R. Chung, P.~Erd\"{o}s, R.~L. Graham, and J.~Spencer.
\newblock On graphs which contain all sparse graphs.
\newblock {\em Ann. Discrete Math}, 12:21--26, 1982.

\bibitem{barabasi1999emergence}
A.-L. Barab{\'a}si and R.~Albert.
\newblock Emergence of scaling in random networks.
\newblock {\em Science}, 286(5439):509--512, 1999.

\bibitem{BCLR}
S.~Bhatt, F.~R. Graham~Chung, T.~Leighton, and A.~Rosenberg.
\newblock {Universal Graphs for Bounded-Degree Trees and Planar Graphs}.
\newblock {\em SIAM Journal on Discrete Mathematics}, 2(2):145--155, 1989.

\bibitem{DBLP:journals/rsa/BollobasRST01}
B.~Bollob{\'{a}}s, O.~Riordan, J.~Spencer, and G.~E. Tusn{\'{a}}dy.
\newblock The degree sequence of a scale-free random graph process.
\newblock {\em Random Struct. Algorithms}, 18(3):279--290, 2001.

\bibitem{brady2006compact}
A.~Brady and L.~J. Cowen.
\newblock Compact routing on power law graphs with additive stretch.
\newblock In {\em ALENEX}, volume~6, pages 119--128. SIAM, 2006.

\bibitem{calvert1997modeling}
K.~L. Calvert, M.~B. Doar, and E.~W. Zegura.
\newblock Modeling internet topology.
\newblock {\em Communications Magazine, IEEE}, 35(6):160--163, 1997.

\bibitem{caminiti2008engineering}
S.~Caminiti, I.~Finocchi, and R.~Petreschi.
\newblock Engineering tree labeling schemes: A case study on least common
  ancestors.
\newblock In {\em Algorithms-ESA 2008}, pages 234--245. Springer, 2008.

\bibitem{chen2012compact}
W.~Chen, C.~Sommer, S.-H. Teng, and Y.~Wang.
\newblock A compact routing scheme and approximate distance oracle for
  power-law graphs.
\newblock {\em ACM Transactions on Algorithms}, 9(1):4, 2012.

\bibitem{chung2004average}
F.~Chung and L.~Lu.
\newblock The average distance in a random graph with given expected degrees.
\newblock {\em Internet Mathematics}, 1(1):91--113, 2004.

\bibitem{chung2006complex}
F.~R. Chung and L.~Lu.
\newblock {\em Complex Graphs and Networks}, volume 107.
\newblock American mathematical society Providence, 2006.

\bibitem{claude2010fast}
F.~Claude and G.~Navarro.
\newblock Fast and compact web graph representations.
\newblock {\em ACM Transactions on the Web (TWEB)}, 4(4):16, 2010.

\bibitem{clauset2009power}
A.~Clauset, C.~R. Shalizi, and M.~E. Newman.
\newblock Power-law distributions in empirical data.
\newblock {\em SIAM review}, 51(4):661--703, 2009.

\bibitem{cohen2010labeling}
E.~Cohen, H.~Kaplan, and T.~Milo.
\newblock Labeling dynamic xml trees.
\newblock {\em SIAM Journal on Computing}, 39(5):2048--2074, 2010.

\bibitem{dahlgaard2014dynamic}
S.~Dahlgaard, M.~B.~T. Knudsen, and N.~Rotbart.
\newblock Dynamic and multi-functional labeling schemes.
\newblock In {\em Algorithms and Computation}, pages 141--153. Springer, 2014.

\bibitem{fischer2009short}
J.~Fischer.
\newblock Short labels for lowest common ancestors in trees.
\newblock In {\em Algorithms-ESA 2009}, pages 752--763. Springer, 2009.

\bibitem{gavoille2007shorter}
C.~Gavoille and A.~Labourel.
\newblock Shorter implicit representation for planar graphs and bounded
  treewidth graphs.
\newblock In {\em Algorithms--ESA 2007}, pages 582--593. Springer, 2007.

\bibitem{gavoillea2004distance}
C.~Gavoille, D.~Peleg, S.~P{\'e}rennesc, and R.~Razb.
\newblock Distance labeling in graphs.
\newblock {\em Journal of Algorithms}, 53:85--112, 2004.

\bibitem{goel2006bounded}
G.~Goel and J.~Gustedt.
\newblock Bounded arboricity to determine the local structure of sparse graphs.
\newblock In {\em Graph-Theoretic Concepts in Computer Science}, pages
  159--167. Springer, 2006.

\bibitem{guillaume2002efficient}
J.-L. Guillaume, M.~Latapy, and L.~Viennot.
\newblock Efficient and simple encodings for the web graph.
\newblock In {\em Advances in Web-Age Information Management}, pages 328--337.
  Springer, 2002.

\bibitem{hakimi1962realizability}
S.~L. Hakimi.
\newblock On realizability of a set of integers as degrees of the vertices of a
  linear graph. i.
\newblock {\em Journal of the Society for Industrial \& Applied Mathematics},
  10(3):496--506, 1962.

\bibitem{Kannan92}
S.~Kannan, M.~Naor, and S.~Rudich.
\newblock Implicit representation of graphs.
\newblock In {\em SIAM Journal On Discrete Mathematics}, pages 334--343, 1992.

\bibitem{katz2004labeling}
M.~Katz, N.~A. Katz, A.~Korman, and D.~Peleg.
\newblock Labeling schemes for flow and connectivity.
\newblock {\em SIAM Journal on Computing}, 34(1):23--40, 2004.

\bibitem{korman2007general}
A.~Korman.
\newblock General compact labeling schemes for dynamic trees.
\newblock {\em Distributed Computing}, 20(3):179--193, 2007.

\bibitem{korman2007compact}
A.~Korman and D.~Peleg.
\newblock Compact separator decompositions in dynamic trees and applications to
  labeling schemes.
\newblock In {\em Distributed Computing}, pages 313--327. Springer, 2007.

\bibitem{Korman07}
A.~Korman and D.~Peleg.
\newblock Labeling schemes for weighted dynamic trees.
\newblock {\em Inf. Comput.}, 205(12):1721--1740, Dec. 2007.

\bibitem{kowalik2006approximation}
{\L}.~Kowalik.
\newblock Approximation scheme for lowest outdegree orientation and graph
  density measures.
\newblock In {\em Algorithms and computation}, pages 557--566. Springer, 2006.

\bibitem{krioukov2004compact}
D.~Krioukov, K.~Fall, and X.~Yang.
\newblock Compact routing on internet-like graphs.
\newblock In {\em INFOCOM 2004. Twenty-third AnnualJoint Conference of the IEEE
  Computer and Communications Societies}, volume~1. IEEE, 2004.

\bibitem{leskovec2009community}
J.~Leskovec, K.~J. Lang, A.~Dasgupta, and M.~W. Mahoney.
\newblock Community structure in large networks: Natural cluster sizes and the
  absence of large well-defined clusters.
\newblock {\em Internet Mathematics}, 6(1):29--123, 2009.

\bibitem{lozin2007minimal}
V.~V. Lozin and G.~Rudolf.
\newblock Minimal universal bipartite graphs.
\newblock {\em Ars Combinatoria}, 84:345--356, 2007.

\bibitem{mitzenmacher2004brief}
M.~Mitzenmacher.
\newblock A brief history of generative models for power law and lognormal
  distributions.
\newblock {\em Internet Mathematics}, 1(2):226--251, 2004.

\bibitem{moon1965minimal}
J.~Moon.
\newblock On minimal n-universal graphs.
\newblock In {\em Proceedings of the Glasgow Mathematical Association},
  volume~7, pages 32--33. Cambridge University Press, 1965.

\bibitem{newman}
M.~Newman.
\newblock {Network data}.
\newblock \url{http://www-personal.umich.edu/~mejn/netdata/}, 2013.
\newblock [Online; accessed 02-Jan-2015].

\bibitem{rotbart2014evaluation}
N.~Rotbart, M.~V. Salles, and I.~Zotos.
\newblock An evaluation of dynamic labeling schemes for tree networks.
\newblock In {\em Experimental Algorithms}, pages 199--210. Springer, 2014.

\bibitem{DBLP:journals/ton/SiganosFFF03}
G.~Siganos, M.~Faloutsos, P.~Faloutsos, and C.~Faloutsos.
\newblock Power laws and the as-level internet topology.
\newblock {\em {IEEE/ACM} Trans. Netw.}, 11(4):514--524, 2003.

\bibitem{spinrad2003efficient}
J.~P. Spinrad.
\newblock {\em Efficient graph representations}.
\newblock American mathematical society, 2003.

\bibitem{waxman1988routing}
B.~M. Waxman.
\newblock Routing of multipoint connections.
\newblock {\em Selected Areas in Communications, IEEE Journal on},
  6(9):1617--1622, 1988.

\end{thebibliography}
\newpage
\appendix
\section{Experimental results in detail}\label{App:ExpRes}
Subsections ~\ref{App:ExpRes:MaxLabelSyn} and \ref{App:ExpRes:MaxLabelEmp}  show the maximum label sizes for all synthetic and real-world data sets, respectively.

\subsection{Maximum label size distribution for synthetic datasets}\label{App:ExpRes:MaxLabelSyn}
\begin{figure}[!ht]
\centering
\subfloat[\small syn300$^{\alpha=2.2}$]{
    \includegraphics[width=0.5\textwidth]{synthetic-300k-alpha22.pdf}
}
\subfloat[\small syn300$^{\alpha=2.4}$]{
    \includegraphics[width=0.5\textwidth]{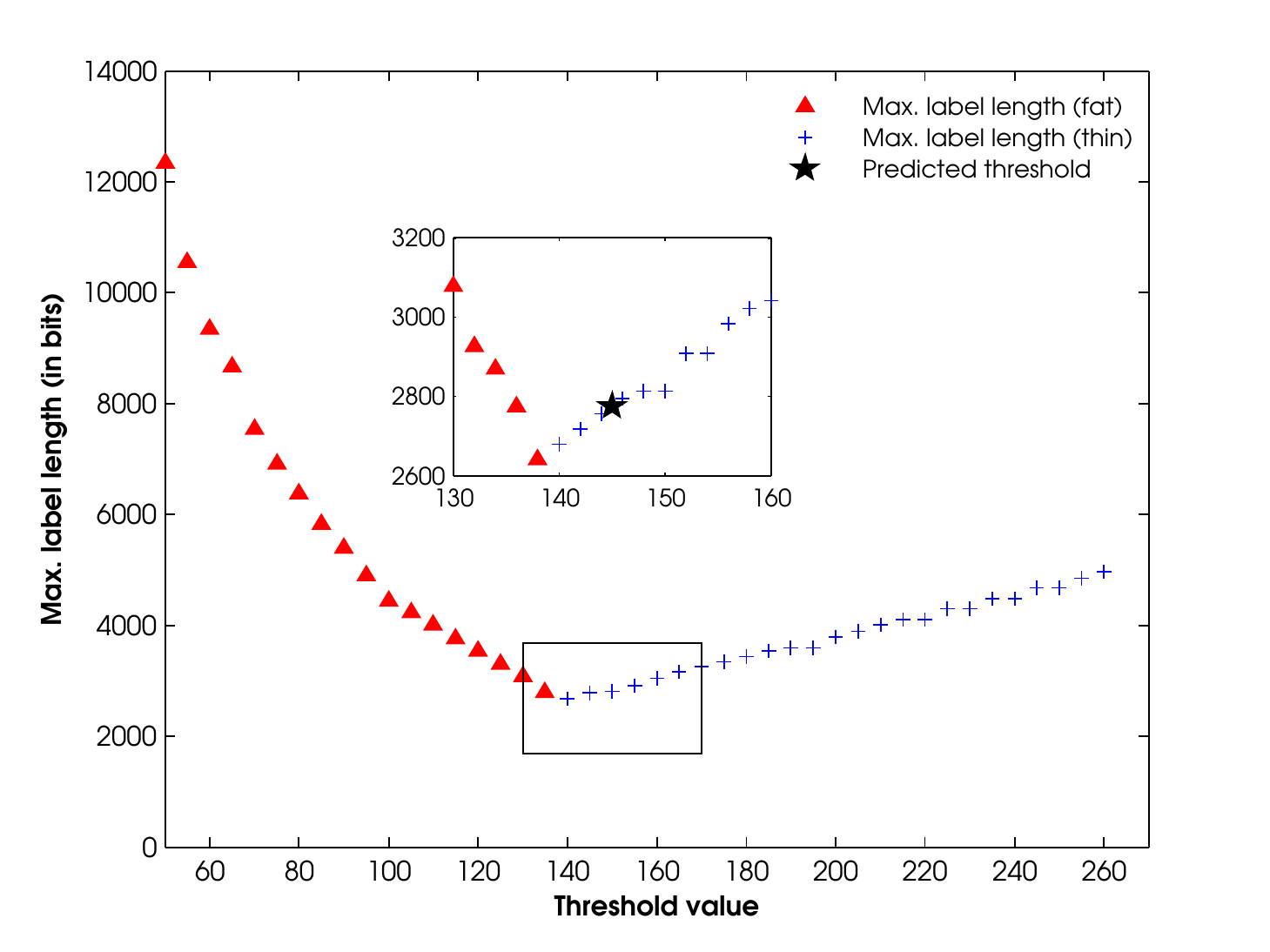}
}%
\quad
\subfloat[\small syn300$^{\alpha=2.6}$]{
    \includegraphics[width=0.5\textwidth]{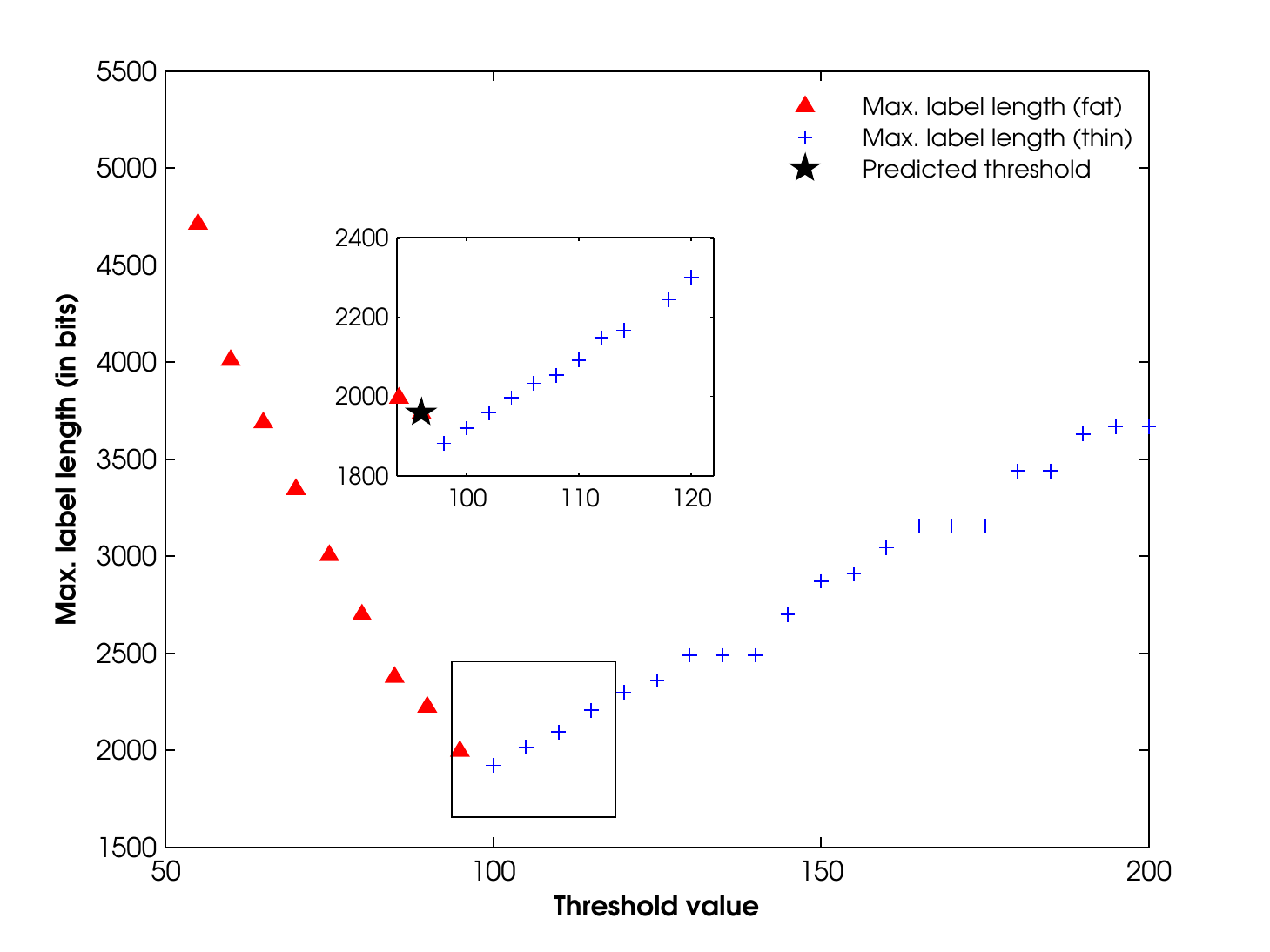}
}
\subfloat[\small syn300$^{\alpha=2.8}$]{
    \includegraphics[width=0.5\textwidth]{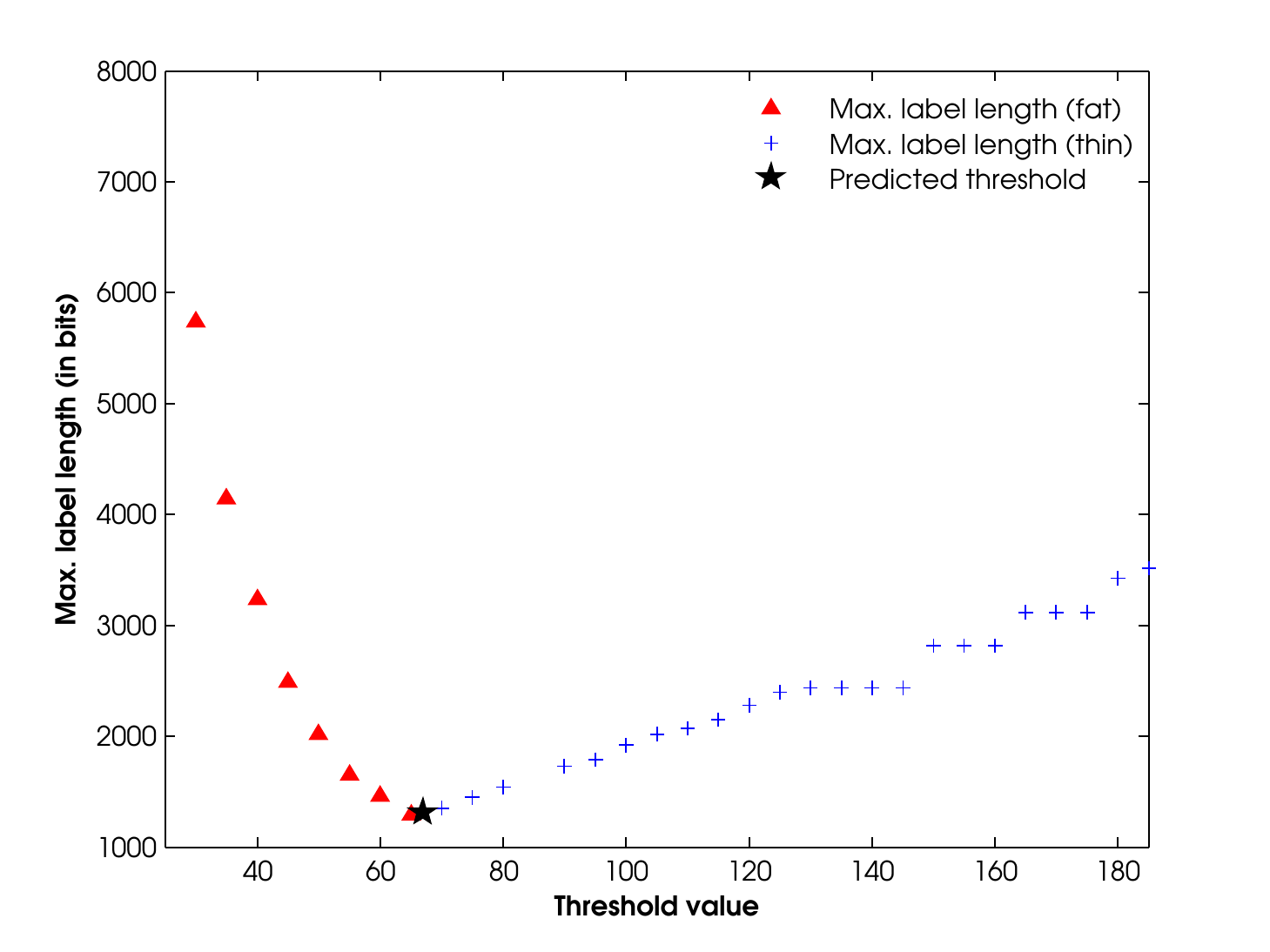}
}%
\caption{Distribution of maximum label sizes for four different synthetic datasets of $\vert V\vert = 300,000$. Each dataset was generated using one of $\alpha$-values: $2.2,~2.4,~2.6,~2.8$. Fat vertices are shown as red triangles and thin vertices as blue crosses. The black pentagram shows the label size obtained by using the \emph{predicted} threshold. The transition between fat and thin vertices is the maximum label size obtained by using the empirical threshold.}%
\label{fig:synthetic300}%
\end{figure}

\begin{figure}[!ht]
\centering
\subfloat[\small syn1M$^{\alpha=2.4}$]{
    \includegraphics[width=0.5\textwidth]{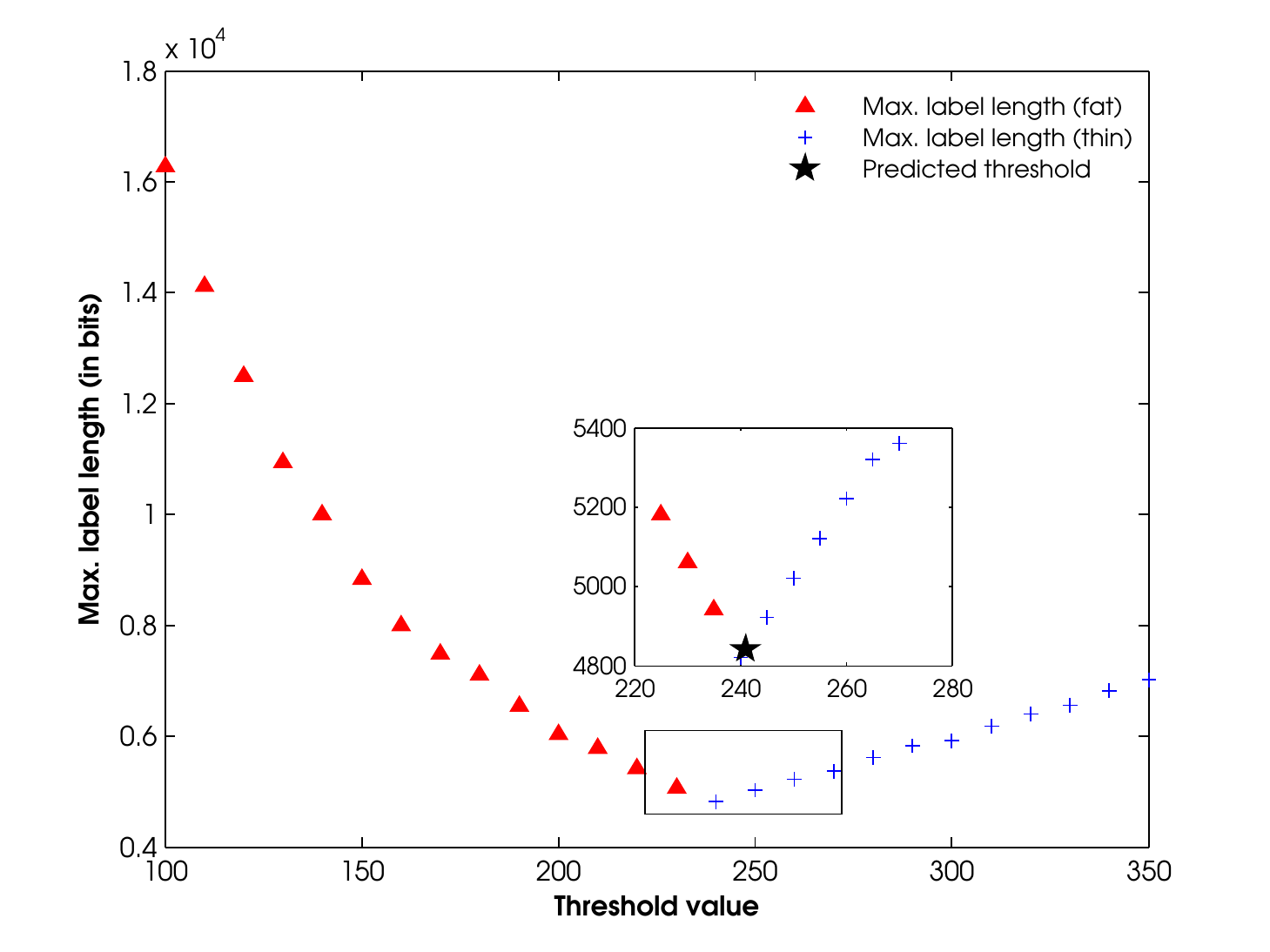}
}%
\subfloat[\small syn1M$^{\alpha=2.6}$]{
    \includegraphics[width=0.5\textwidth]{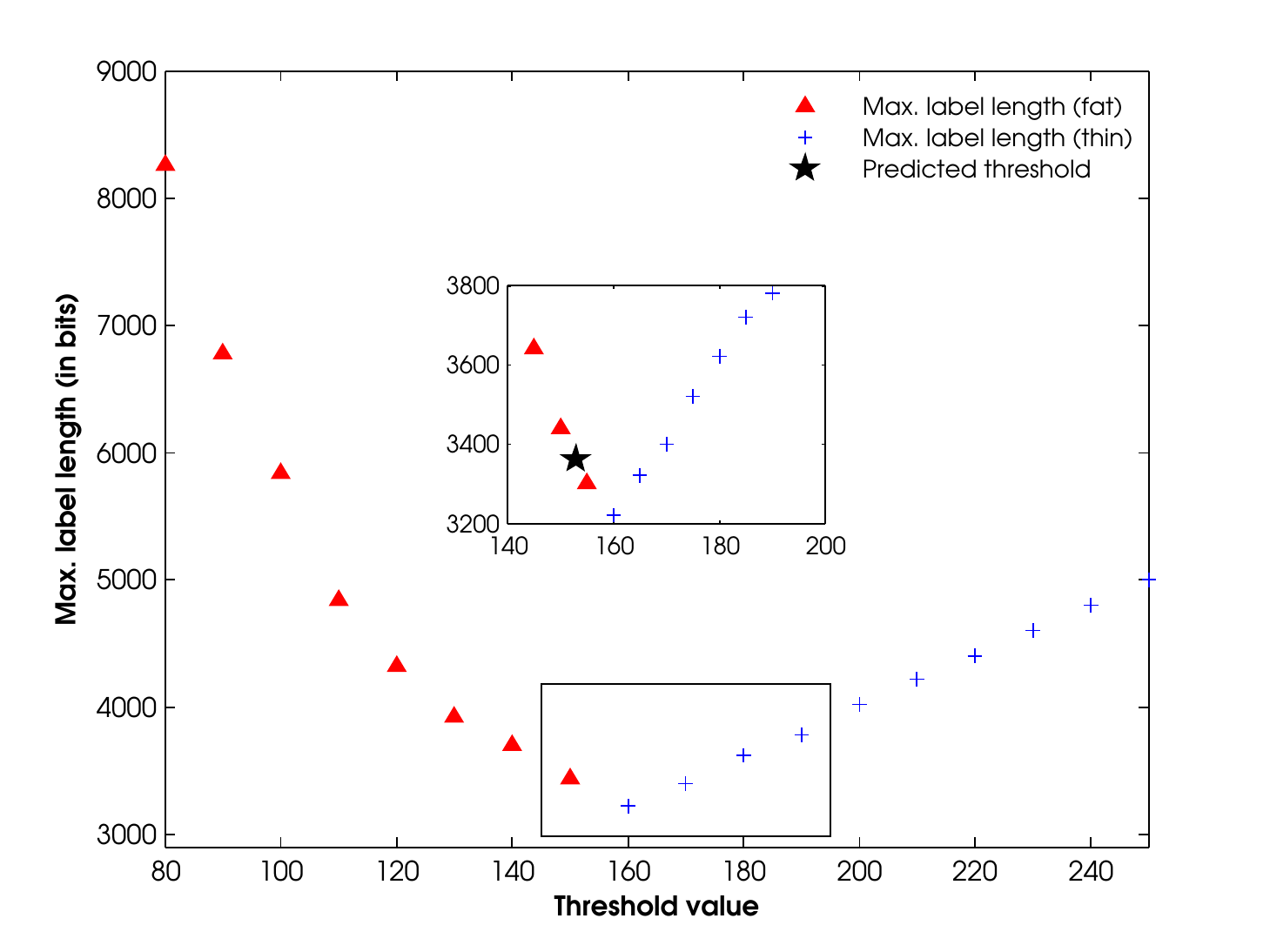}
}
\quad
\subfloat[\small syn1M$^{\alpha=2.8}$]{
    \includegraphics[width=0.5\textwidth]{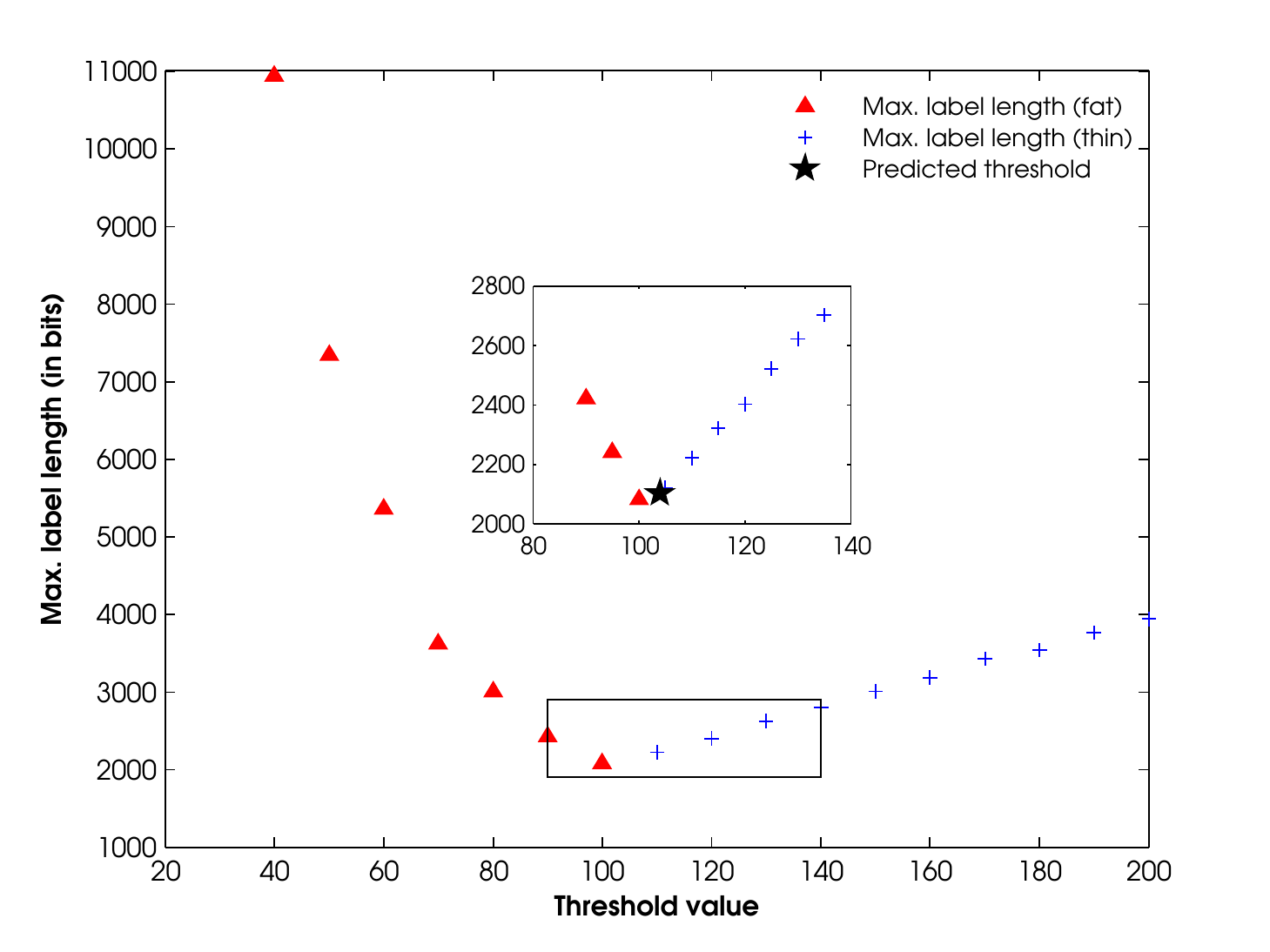}
}%
\caption{Distribution of maximum label sizes for three different synthetic datasets of $\vert V\vert = 1,000,000$. Each dataset was generated using one of $\alpha$-values: $2.4,~2.6,~2.8$. Fat vertices are shown as red triangles and thin vertices as blue crosses. The black pentagram shows the label size obtained by using the \emph{predicted} threshold. The transition between fat and thin vertices is the maximum label size obtained by using the empirical threshold.}%
\label{fig:synthetic1M}%
\end{figure}
\clearpage

\subsection{Maximum label size distribution for real-life datasets}\label{App:ExpRes:MaxLabelEmp}
For completeness,  we  provide an illustration of the best-fitting power law fitted to the probability mass function of the data.

\begin{figure}[!ht]
\centering
\subfloat[\small Fat and thin vertices vs. threshold values]{
    \includegraphics[width=0.5\textwidth]{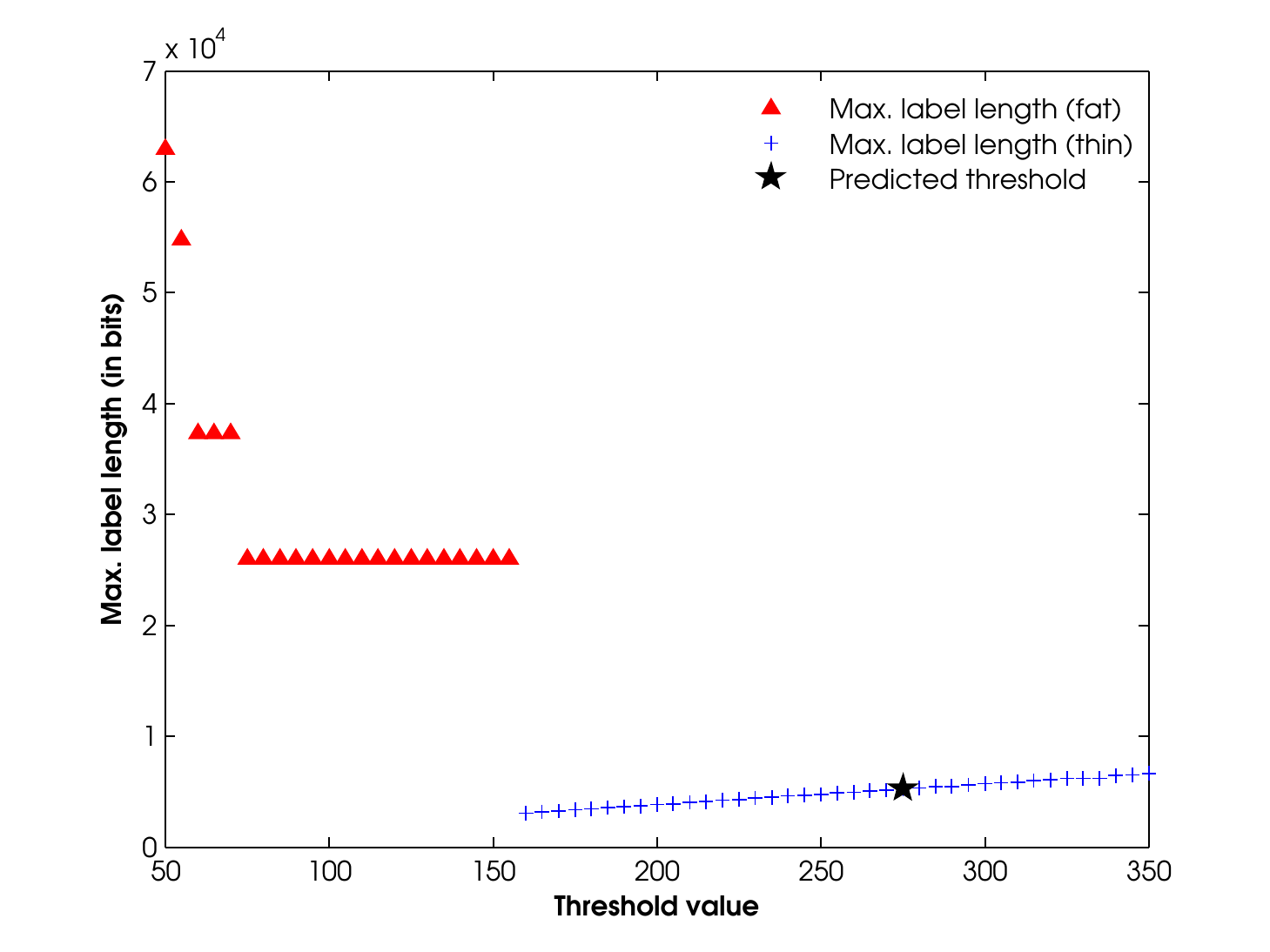}
}%
\subfloat[\small Power law fit]{
    \includegraphics[width=0.5\textwidth]{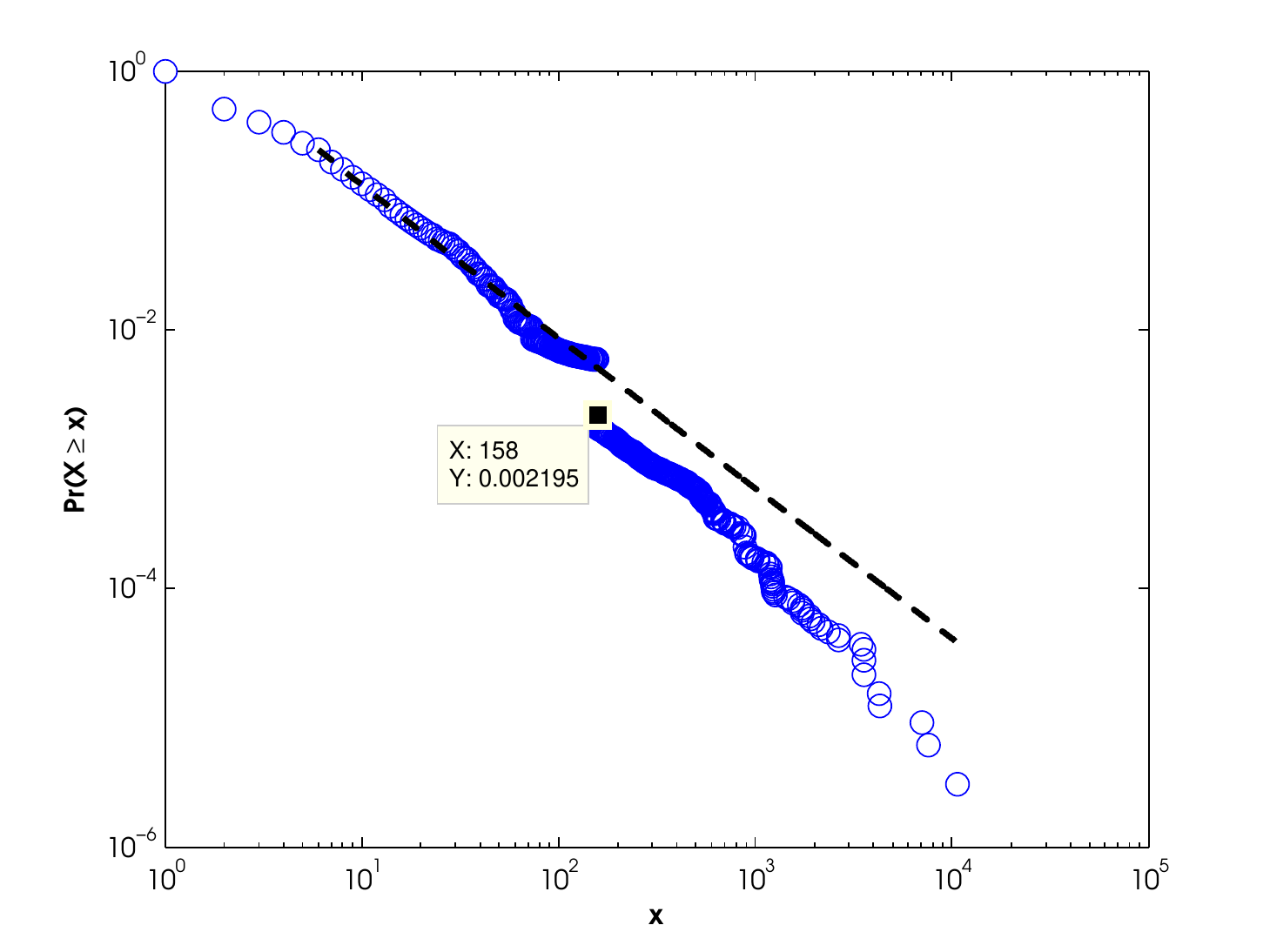}
}%
\caption{Left: Fat and thin vertices plotted against increasing threshold values for the \textsc{www} dataset. The black pentagram shows the predicted threshold ($1/\zeta(\alpha)\sqrt[\alpha](n)$) rounded to nearest integer. Right: Best-fitting power law ($\alpha = 2.16$) superimposed on the complementary cumulative distribution function (CCDF) using the framework by \cite{clauset2009power}.} %
\label{fig:www}%
\end{figure}

\begin{figure}[!ht]
\centering
\subfloat[\small Fat and thin vertices vs. threshold values]{
    \includegraphics[width=0.5\textwidth]{enron-mail.pdf}
}%
\subfloat[\small Power law fit]{
    \includegraphics[width=0.5\textwidth]{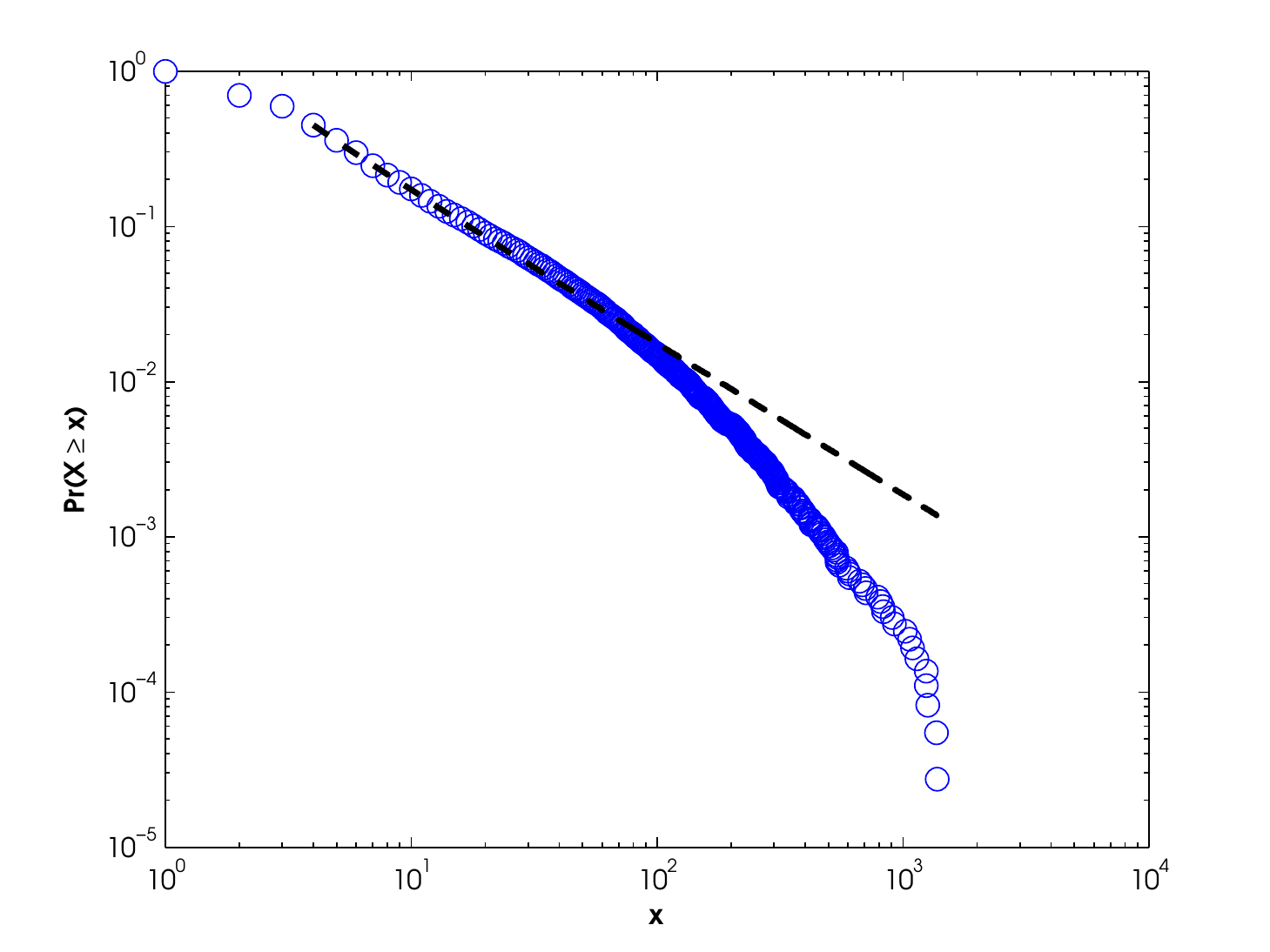}
}%
\caption{Left: Fat and thin vertices plotted against increasing threshold values for the \textsc{enron} email communication dataset. The black pentagram is the predicted threshold ($1/\zeta(\alpha)\sqrt[\alpha](n)$) rounded to the nearest integer. Right: Right: Best-fitting power law ($\alpha = 1.97$)  superimposed on the complementary cumulative distribution function (CCDF) using the framework by \cite{clauset2009power}.}
\label{fig:enron}%
\end{figure}

\begin{figure}[!ht]
\centering
\subfloat[\small Fat and thin vertices vs. threshold values]{
    \includegraphics[width=0.5\textwidth]{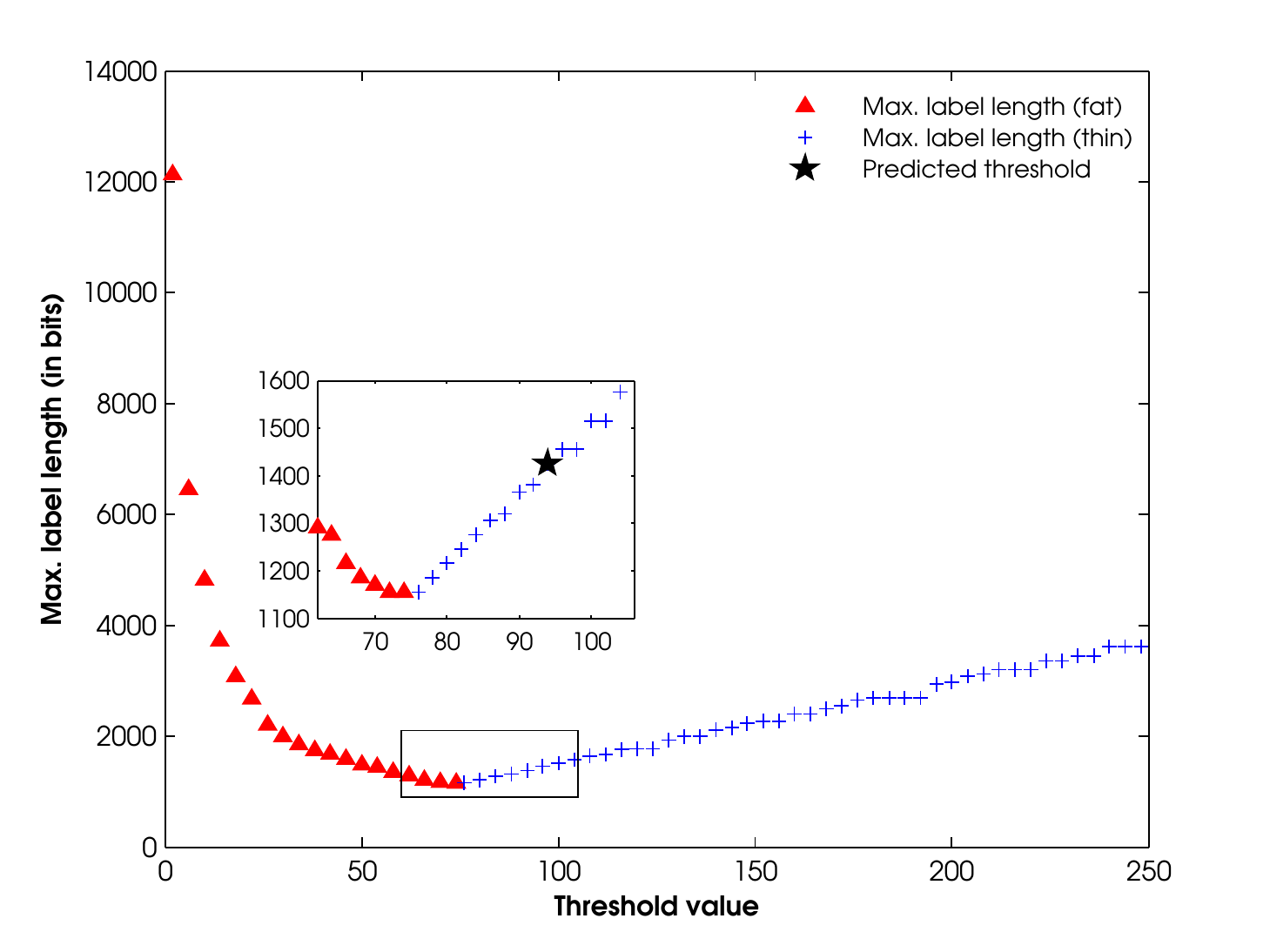}
}%
\subfloat[\small Power law fit]{
    \includegraphics[width=0.5\textwidth]{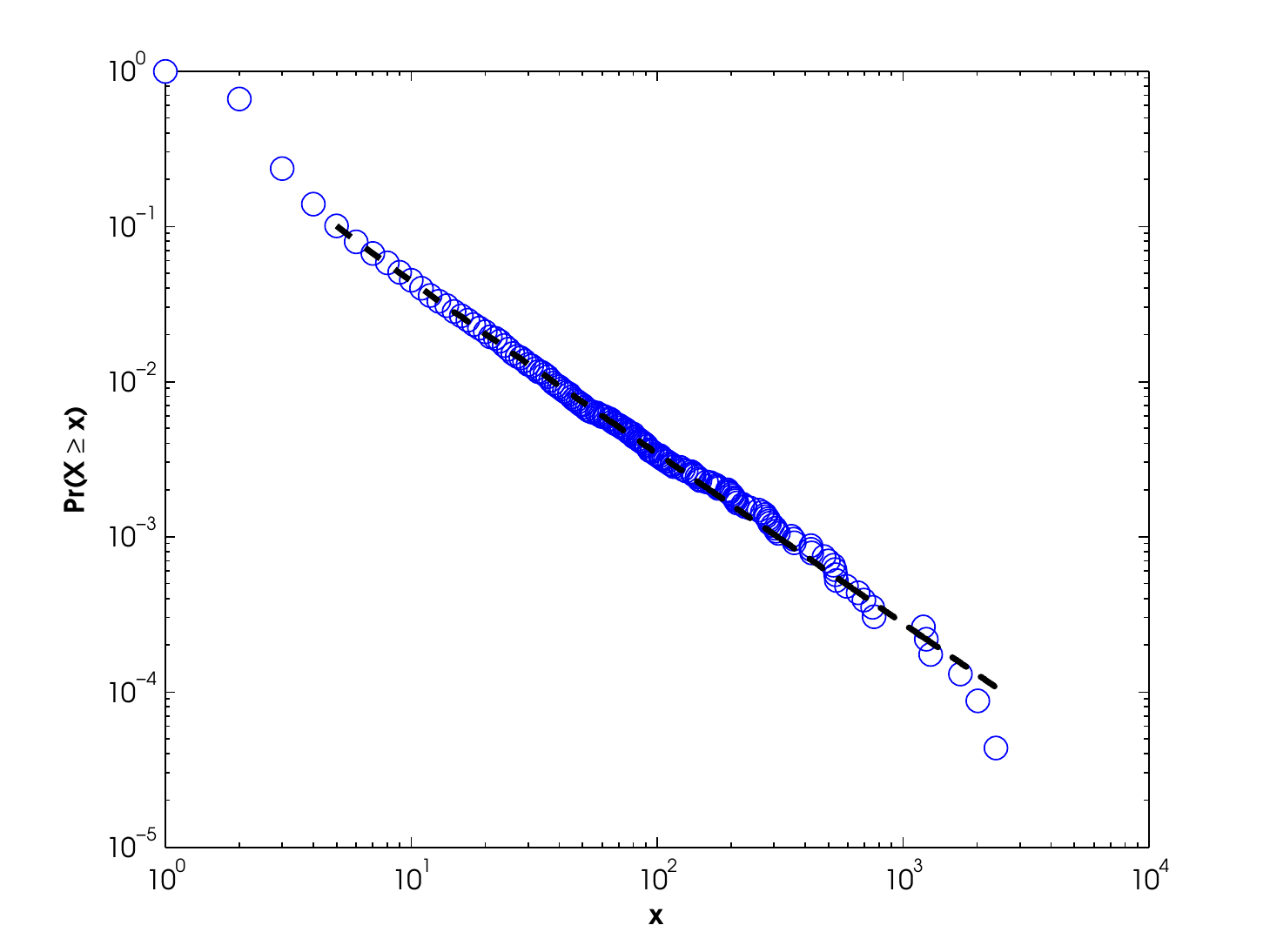}
}%
\caption{Left: Fat and thin vertices plotted against increasing threshold values for the \textsc{internet} dataset. The black pentagram is the predicted threshold ($1/\zeta(\alpha)\sqrt[\alpha](n)$) rounded to nearest integer. Right: Right: Best-fitting power law ($\alpha = 2.09$) superimposed on the complementary cumulative distribution function (CCDF) using the framework by \cite{clauset2009power}.} %
\label{fig:internet}%
\end{figure}

\end{document}